\newcommand{\quot}[1]{``#1''}
\newtheorem{lemma}{Lemma}
\newtheorem{definition}{Definition}
\newtheorem{theorem}{Theorem}
\newtheorem{corollary}{Corollary}
\newtheorem{example}{Example}
\newcommand{\coloneq}{\mathop{:=}}
\newcommand{\cceq}{\mathop{::=}}
\renewcommand{\epsilon}{\varepsilon}
\newcommand{\pow}[1]{2^{#1}}
\newcommand{\nats}{\mathbb{N}}
\newcommand{\card}[1]{|#1|}
\newcommand{\size}[1]{\card{#1}}
\newcommand{\set}[1]{\{#1\}}
\newcommand{\aut}{\mathfrak{A}}
\newcommand{\arena}{\mathcal{A}}
\newcommand{\game}{\mathcal{G}}
\newcommand{\col}{\Omega}
\newcommand{\ttrue}{\texttt{tt}}
\newcommand{\ffalse}{\texttt{ff}}
\newcommand{\F}{\mathop{\mathbf{F}}}
\newcommand{\U}{\mathbin{\mathbf{U}}}
\newcommand{\conc}{\,;}
\newcommand{\Var}{\mathcal{V}}
\newcommand{\cl}{\mathrm{cl}}
\newcommand{\var}{\mathrm{var}}
\newcommand{\vardiamond}{\mathrm{var}_{\lozenge}}
\newcommand{\varbox}{\mathrm{var}_{\square}}
\newcommand{\ddiamond}[1]{\langle #1 \rangle}
\newcommand{\bbox}[1]{[#1]}
\newcommand{\Rexp}{\mathcal{R}}
\newcommand{\rel}[1]{\mathrm{rel}(#1)}
\newcommand{\ltl}{\mathrm{LTL}}
\newcommand{\pltl}{\mathrm{PLTL}}
\newcommand{\prompt}{\mathrm{PROMPT}$\textendash$\ltl}
\newcommand{\pldl}{\mathrm{PLDL}}
\newcommand{\ldl}{\mathrm{LDL}}
\newcommand{\ldlt}{\mathrm{LDL}_{cp}}
\newcommand{\pldldiamond}{\mathrm{PLDL}_{\lozenge}}
\newcommand{\pldlbox}{\mathrm{PLDL}_{\square}}
\newcommand{\nlogspace}{\textsc{Nlogspace}}
\newcommand{\pspace}{\textbf{\textsc{Pspace}}}
\newcommand{\twoexp}{\textbf{\textsc{2Exptime}}}
\newcommand{\bplus}{\mathcal{B}^{+}}
\newcommand{\trace}{\mathrm{tr}}
\newcommand{\suc}[2]{\mathrm{Succ}_{#1}{#2}}
\newcommand{\outcome}{\mathrm{outcome}}
\newcommand{\trans}{\mathcal{T}}
\newcommand{\sys}{\mathcal{S}}
\newcommand{\marking}{m}
\title{Parametric Linear Dynamic Logic\thanks{This work was partially supported by the German Research Foundation (DFG) as part of SFB/TR 14 ``AVACS''.}}
\author{Peter Faymonville \qquad\qquad Martin Zimmermann 
\institute{Reactive Systems Group, Saarland University, 66123 			Saarbr\"{u}cken, Germany}
\email{\{faymonville, zimmermann\}@react.uni-saarland.de}
}
\begin{document}
\maketitle


\begin{abstract}
We introduce Parametric Linear Dynamic Logic (PLDL), which extends Linear Dynamic Logic (LDL) by temporal operators equipped with parameters that bound their scope. LDL was proposed as an extension of Linear Temporal Logic (LTL) that is able to express all $\omega$-regular specifications while still maintaining many of LTL's desirable properties like an intuitive syntax and a translation into non-deterministic B\"uchi automata of exponential size. But LDL lacks capabilities to express timing constraints. By adding parameterized operators to LDL, we obtain a logic that is able to express all $\omega$-regular properties and that subsumes parameterized extensions of LTL like Parametric LTL and PROMPT-LTL.

Our main technical contribution is a translation of PLDL formulas into non-deterministic B\"uchi word automata of exponential size via alternating automata. This yields a PSPACE model checking algorithm and a realizability algorithm with doubly-exponential running time. Furthermore, we give tight upper and lower bounds on optimal parameter values for both problems. These results show that PLDL model checking and realizability are not harder than LTL model checking and realizability.
\end{abstract}

\section{Introduction}
\label{sec_intro}
Linear temporal logic ($\ltl$) is a popular specification language for the verification and synthesis of reactive systems. It provides semantic foundations for industrial logics like PSL~\cite{EisnerFismanPSL}. $\ltl$ has a number of desirable properties contributing to its ongoing popularity: it does not rely on the use of variables, it has an intuitive syntax and thus gives a way for practitioners to write declarative and concise specifications. Furthermore, it is expressively equivalent to first-order logic over the natural numbers with successor and order~\cite{Kamp68} and enjoys an exponential compilation property: one can efficiently construct a language-equivalent non-deterministic Büchi automaton of exponential size in the size of the specification.	
The exponential compilation property yields a $\pspace$ model checking algorithm and a $\twoexp$ algorithm for realizability. Both problems are complete for the respective classes.
	
	Model checking of properties described in $\ltl$ or its practical descendants is routinely applied in industrial-sized applications, especially for hardware systems \cite{Forspec02,EisnerFismanPSL}. Due to its complexity, the realizability problem has not reached industrial acceptance (yet). First approaches used a determinization procedure for $\omega$-automata, which is notoriously hard to implement efficiently \cite{AlthoffThomasWallmeier2006}. More recent algorithms for realizability follow a safraless construction \cite{FiliotJinRaskin2011,FinkbeinerSchewe2013}, which avoids explicitly constructing the deterministic automaton, and are showing promise on small examples. 

Despite the desirable properties, two drawbacks of $\ltl$ remain and are tackled by different approaches in the literature: first, $\ltl$ is not able to express all $\omega$-regular properties. For example, the property ``$p$ holds on every even step'' (but may or may not hold on odd steps) is not expressible in $\ltl$, but easily expressible as an $\omega$-regular expression. This drawback is a serious one, since the combination of regular properties and linear-time operators is common in hardware verification languages. Several extensions of $\ltl$~\cite{LeuckerSanchez07, VardiWolper94, Wolper1983} with regular expressions, finite automata, or grammar operators have been proposed as a remedy.  

A second drawback of classic temporal logics like $\ltl$ is the inability to natively express timing constraints. The standard semantics are unable to enforce the fulfillment of eventualities within finite time bounds, e.g., it is impossible to require that requests are granted within a fixed, but arbitrary, amount of time. While it is possible to unroll an a-priori fixed bound for an eventuality into $\ltl$, this requires prior knowledge of the system's granularity and incurs a blow-up when translated to automata, and is thus considered impractical.
 A more practical way of fixing this drawback has been the purpose of a long line of work in parametric temporal logics, such as parametric $\ltl$~\cite{AlurEtessamiLaTorrePeled01}, $\prompt$~\cite{KupfermanPitermanVardi09} and parametric metric interval temporal logic~\cite{GiampaoloTorreNapoli10}. All of them add parameters to the temporal operators to express time bounds, and either test the existence of a global time bound, like $\prompt$, or of individual bounds on the parameters, like parametric $\ltl$. 

Recently, the first drawback was revisited by De Giacomo and Vardi ~\cite{GiacomoVardi13, Vardi11} by introducing an extension of $\ltl$ called linear dynamic logic ($\ldl$), which is as expressive as $\omega$-regular languages. The syntax of $\ldl$ is inspired by propositional dynamic logic (PDL)~\cite{FischerLadner1979}, but the semantics follow linear-time logics. In PDL and $\ldl$, programs are expressed by regular expressions with tests, and temporal requirements are specified by two basic modalities:  $\ddiamond{r}\varphi$ and $\bbox{r}\varphi$, stating that $\varphi$ should hold  at some position where $r$ matches, or at all positions where $r$ matches, respectively. The operators to specify regular expressions from propositional formulas are as follows: sequential composition ($r_1 \conc r_2$), nondeterministic choice ($r_1 + r_2$), repetition ($r^*$), and test $(\varphi?)$ of a temporal formula. On the level of the temporal operators, conjunction and disjunction are allowed. The tests allow to check temporal properties within programs, and are needed to encode $\ltl$ into $\ldl$. 

As an example, the program ``{\bf while} $q$ {\bf do} $a$'' with property $p$ holding after the execution of the loop is expressed in PDL/$\ldl$ as follows:  $\bbox{(q?\conc a)^*\conc \neg q? } p$. Intuitively, the loop condition $q$ is tested on every loop entry, the loop body $a$ is executed/consumed until $\neg q$ holds, and then the post-condition $p$ has to hold. A request-response property (i.e., every request should eventually be followed by a response) can be formalized as follows: $\bbox{\ttrue^*} (\textit{req} \rightarrow \ddiamond{\ttrue^*} \textit{resp})$. 

Both aforementioned drawbacks of $\ltl$, the inability to express all $\omega$-regular properties and the missing capability to specify timing constraints, have been tackled individually in a successful way in previous work, but not at the same time. Here, we propose a logic called $\pldl$ that combines the expressivity of $\ldl$ with the parametricity of P$\ltl$ on infinite traces.

	In $\pldl$, we are for example able to parameterize the eventuality of the  request-response condition, denoted as  ${\bbox{\ttrue^*}(\textit{req}\rightarrow\ddiamond{\ttrue^*}_{\le x}\textit{resp})}$, which states that every request has to be followed by a response within $x$ steps. 
	In the $\pldl$ model checking problem, we determine whether there exists a valuation $\alpha(x)$ for $x$ such that all paths of the system respond to requests within $\alpha(x)$ steps.
	If we take the property as a specification for the $\pldl$ realizability problem, and define \textit{req} as input, \textit{resp} as output, we compute whether there exists a winning strategy that adheres to a valuation $\alpha(x)$ and is able to ensure the delivery of responses to requests in a timely manner.
	 
The main result of this paper is the translation of $\pldl$ to alternating Büchi automata. By an extension of the alternating color technique of \cite{KupfermanPitermanVardi09}, and by very similar algorithms, we obtain the following results: $\pldl$ model checking is $\pspace$-complete and realizability is $\twoexp$-complete. Thus, both problems are no harder than their corresponding variants for $\ltl$. Finally, we give tight exponential and doubly-exponential bounds on satisfying valuations for model checking and realizability.

Our translation might also be of use for $\ldl$ on infinite traces, since De Giacomo and Vardi~\cite{GiacomoVardi13} only considered $\ldl$ on finite traces. Unlike the translation from logic into automata presented there, which is a top-down construction of an alternating automaton, we present  a bottom-up approach.

\section{PLDL}
\label{sec_defs}
Let $\Var$ be an infinite set of variables and let us fix a finite\footnote{This greatly
simplifies our notation and exposition when we translate formulas into automata,
but is not essential.} set~$P$ of atomic propositions which we use to build our
formulas and to label transition systems in which we evaluate them. For a subset $A \in \pow{P}$ and a propositional formula~$\phi$ over $P$, we write $A \models \phi$, if the variable valuation mapping elements in $A$ to true and elements not in $A$ to false satisfies $\phi$. The formulas of $\pldl$ are given by the grammar
\begin{align*}
\varphi &\cceq p \mid \neg p \mid \varphi \wedge \varphi \mid \varphi \vee \varphi
  \mid \ddiamond{r} \varphi 
  \mid \bbox{r} \varphi 
  \mid \ddiamond{r}_{\le z} \varphi 
  \mid \bbox{r}_{\le z} \varphi\\
  r & \cceq \phi \mid \varphi? \mid r+r \mid r \conc r \mid r^*
\end{align*}
where $p \in P$, $z \in \Var$, and where $\phi$ stands for arbitrary propositional formulas over $P$. We use the abbreviations~$\ttrue = p \vee \neg p$ and $\ffalse = p \wedge \neg p$ for some atomic proposition~$p$. The regular expressions have two types of atoms: propositional formulas~$\phi$ over the atomic propositions and tests~$\varphi?$, where $\varphi$ is again a $\pldl$ formula. Note that the semantics of the propositional atom~$\phi$ differ from the semantics of the test~$\phi?$: the former consumes an input letter, while tests do not make progress on the word. This is why both types of atoms are allowed.

The set of subformulas of $\varphi$ is denoted by $\cl(\varphi)$. Note that regular expressions are not subformulas, but the formulas appearing in the tests are, e.g., we have $\cl(\ddiamond{p?\conc q}_{\le x}r) = \set{p, r, \ddiamond{p?\conc q}_{\le x}r}$. The size~$\card{\varphi}$ of $\varphi$ is the sum of $\card{\cl(\varphi)}$ and the sum of the lengths of the regular expressions appearing in $\varphi$ (counted with multiplicity). We define $\vardiamond( \varphi ) = \{ z\in \Var \mid \ddiamond{r}_{\le z}\psi \in
\cl( \varphi) \}$ to be the set of variables parameterizing diamond operators in
$\varphi$, $\varbox( \varphi ) = \{ z\in \Var \mid \bbox{r}_{\le z} \psi \in \cl( \varphi)  \} $
to be the set of variables parameterizing box operators in $\varphi$, and set
$\var( \varphi ) = \vardiamond( \varphi ) \cup \varbox( \varphi )$. Usually, we will denote
variables in $\vardiamond( \varphi )$ by $x$ and variables in $\varbox( \varphi )$ by $y$, if $\varphi$ is clear from the context. A
formula~$\varphi$ is variable-free, if $\var( \varphi ) = \emptyset$.

The semantics of $\pldl$ are defined inductively with respect to an $\omega$-word~$w = w_0 w_1 w_2 \cdots \in (\pow{P})^\omega$, a position~$n \in \nats$, and a variable valuation~$\alpha \colon \Var \rightarrow \nats$ via
\begin{itemize}

\item $(w, n, \alpha) \models p$ if $p \in w_n$ and dually for $\neg p$,


\item $(w, n, \alpha) \models \psi_0 \wedge \psi_1$ if $(w, n, \alpha) \models \psi_0$ and $(w, n, \alpha) \models \psi_1$,

\item $(w, n, \alpha) \models \psi_0 \vee \psi_1$ if $(w, n, \alpha) \models \psi_0$ or $(w, n, \alpha) \models \psi_1$,

\item $(w, n, \alpha) \models \ddiamond{r}\psi$ if there exists $j \in \nats$ s.t.\ $(n, n+j) \in \Rexp(r, w, \alpha)$ and $(w, n+j, \alpha) \models \psi$,

\item $(w, n, \alpha) \models \bbox{r}\psi$ if for all $j \in \nats$ with $(n, n+j) \in \Rexp(r, w, \alpha)$ we have $(w, n+j, \alpha) \models \psi$,

\item $(w, n, \alpha) \models \ddiamond{r}_{\le z}\psi$ if there exists $0 \le j \le \alpha(z)$ s.t.\ $(n, n+j) \in \Rexp(r, w, \alpha)$ and $(w, n+j, \alpha) \models \psi$, 

\item $(w, n, \alpha) \models \bbox{r}_{\le z}\psi$ if for all $0 \le j \le \alpha(z)$ with $(n, n+j) \in \Rexp(r, w, \alpha)$ we have $(w, n+j, \alpha) \models \psi$.

\end{itemize}
Here, the relation~$\Rexp(r,w,\alpha) \subseteq \nats\times\nats$ contains all pairs~$(m,n)$ such that $w_m \cdots w_{n-1}$ matches $r$ ($\alpha$ is needed to evaluate tests in $r$, which might have parameterized subformulas) and is defined inductively by 
\begin{itemize}
\item $\Rexp(\phi,w,\alpha) = \set{(n, n+1) \mid w_n \models \phi}$ for propositional~$\phi$,
\item $\Rexp(\psi?,w,\alpha) = \set{(n, n) \mid (w, n, \alpha) \models \psi}$,
\item $\Rexp(r_0 + r_1, w, \alpha) = \Rexp(r_0, w, \alpha) \cup \Rexp(r_1, w, \alpha)$,
\item $\Rexp(r_0 \conc r_1, w, \alpha) = \set{(n_0, n_2) \mid \exists n_1 \text{ s.t. }(n_0,n_1)\in \Rexp(r_0, w, \alpha) \text{ and } (n_1, n_2) \in \Rexp(r_1, w, \alpha)}$, and 
\item $\Rexp(r^*, w, \alpha) = \set{(n,n) \mid n\in\nats} \cup \set{(n_0, n_{k+1}) \mid \exists n_1, \ldots, n_{k} \text{ s.t. } (n_j, n_{j+1}) \in \Rexp(r, w, \alpha) \text{ for all } j \le k}$.
\end{itemize}
We write $(w,\alpha) \models \varphi$ for $(w, 0, \alpha) \models \varphi$ and say that $w$ is a model of $\varphi$ with respect to $\alpha$.

\begin{example}\label{ex_pldl} \hfill
\begin{itemize}

\item The formula~$\theta_{\infty p} \coloneq \bbox{\ttrue^*}\ddiamond{\ttrue^*}p$ expresses that $p$ holds true infinitely often. 

\item In general, every $\pltl$ formula~\cite{AlurEtessamiLaTorrePeled01} (and thus every $\ltl$ formula) can be translated into $\pldl$, e.g., $\F_{\le x} p$ is expressible as $\ddiamond{\ttrue^*}_{\le x} p$ and $ p \U q$ as  $\ddiamond{p^*}q$ or $\ddiamond{p^*q}\ttrue$.

\item  The formula~$\bbox{\ttrue^*}(q \rightarrow \ddiamond{(\ttrue \conc\ttrue)^*p})$ requires that every request (a position where $q$ holds) is followed by a response (a position where $p$ holds) after an even number of steps. 

\end{itemize}
\end{example}

As usual for parameterized temporal logics, the use of variables has to be
restricted: bounding diamond and box operators by the same variable leads
to an undecidable satisfiability problem (cp.~\cite{AlurEtessamiLaTorrePeled01}).

\begin{definition}
\label{def_wellformedformula}
A $\pldl$ formula~$\varphi$ is well-formed, if $\vardiamond( \varphi ) \cap \varbox( \varphi ) =
\emptyset$.
\end{definition}
In the following, we only consider well-formed formulas and drop the qualifier \quot{well-formed}.
We consider the following fragments of $\pldl$. Let $\varphi$ be a $\pldl$ formula: $\varphi$ is an $\ldl$ formula~\cite{GiacomoVardi13}, if $\varphi$ is variable-free, $\varphi$ is a $\pldldiamond$ formula, if $\varbox(\varphi) = \emptyset$, and $\varphi$ is a $\pldlbox$ formula, if $\vardiamond(\varphi) = \emptyset$.
Every $\ldl$, $\pldldiamond$, and every $\pldlbox$ formula is well-formed by definition. As satisfaction of $\ldl$ formulas is independent of variable valuations, we write $(w, n) \models \varphi$ and $w \models \varphi$ instead of $(w, n, \alpha) \models \varphi$ and $(w, \alpha) \models \varphi$, respectively, if $\varphi$ is an $\ldl$ formula. 

$\ldl$ is as expressive as $\omega$-regular languages, which can be proven by a straightforward translation of ETL$_f$~\cite{VardiWolper94}, which expresses exactly the $\omega$-regular languages, into $\ldl$. 

\begin{theorem}[\cite{Vardi11}]
For every $\omega$-regular language $L \subseteq (\pow{P})^\omega$ there exists an effectively constructible $\ldl$ formula $\varphi$ such that $L = \set{w \in (\pow{P})^\omega \mid w \models \varphi }$.
\end{theorem}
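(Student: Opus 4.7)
The plan is to use the standard decomposition of $\omega$-regular languages into a finite union of sets of the form $U\cdot V^\omega$, where $U$ and $V$ are regular languages of finite words. This decomposition is effectively computable from any non-deterministic Büchi automaton for $L$: the finitely many accepting runs of such an automaton can be split at a repeated accepting state, yielding, for each pair $(q_0, q)$ with $q$ accepting, the regular language~$U_{q_0,q}$ of finite words labelling a run from $q_0$ to $q$ and the regular language~$V_{q,q}$ of non-empty finite words labelling a run from $q$ back to itself. Hence $L = \bigcup_{i=1}^k U_i V_i^\omega$ with $\epsilon \notin V_i$, and each $U_i$ and $V_i$ can be represented by classical regular expressions~$r_{U_i}$, $r_{V_i}$ built from the atoms, concatenation, choice, and Kleene star available in the regex syntax of $\ldl$.

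Next, I would encode each summand~$U_i V_i^\omega$ as an $\ldl$ formula and take their disjunction. The natural candidate is
\[
  \varphi \coloneq \bigvee_{i=1}^k \ddiamond{r_{U_i}}\bigl( \bbox{r_{V_i}^*}\,\ddiamond{r_{V_i}}\ttrue \bigr).
\]
Intuitively, $\ddiamond{r_{U_i}}\psi$ picks a prefix matching $U_i$ and then asserts $\psi$ from the end of that prefix, while the inner subformula~$\bbox{r_{V_i}^*}\ddiamond{r_{V_i}}\ttrue$ should say \quot{from here onwards, infinitely many $V_i$-blocks can be consumed}.

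The key step is to verify that $(w,n) \models \bbox{r_{V_i}^*}\ddiamond{r_{V_i}}\ttrue$ iff the suffix $w_n w_{n+1} \cdots$ belongs to $V_i^\omega$. For the nontrivial direction, the assumption says that every position~$m$ with $(n,m) \in \Rexp(r_{V_i}^*, w, \alpha)$ admits some~$m' \geq m$ with $(m,m') \in \Rexp(r_{V_i}, w, \alpha)$; since $\epsilon \notin V_i$, we in fact have $m' > m$, so by iterating the choice we obtain a strictly increasing sequence~$n = n_0 < n_1 < n_2 < \cdots$ with each $w_{n_j}\cdots w_{n_{j+1}-1}$ matching $r_{V_i}$, exhibiting the required factorization in $V_i^\omega$. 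The converse direction is immediate from the semantics of the Kleene star on $\Rexp$. Combined with the corresponding statement for the outer diamond, this gives $w \models \varphi$ iff $w \in L$, and every construction step is effective, completing the proof.

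The main obstacle I anticipate is the inductive construction of the infinite factorization in the \quot{box-diamond} direction: one has to leverage the non-emptiness assumption~$\epsilon \notin V_i$ to rule out the degenerate case in which the witness~$m'$ coincides with $m$, which would otherwise permit the box quantification to be vacuously satisfied without producing actual progress along the word. Everything else is a routine unwinding of the $\ldl$ semantics and the standard correspondence between regular expressions and regular languages.
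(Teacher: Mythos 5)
Your construction has a genuine gap, and it sits exactly at the step you dismiss as immediate. The claimed equivalence ``$(w,n) \models \bbox{r_{V}^*}\ddiamond{r_{V}}\ttrue$ iff $w_n w_{n+1} \cdots \in V^\omega$'' is false in the right-to-left direction. The box quantifies over \emph{all} positions reachable from $n$ by an $r_V^*$-match, including positions produced by ``bad'' partial factorizations that cannot be extended by any further $V$-block, whereas membership in $V^\omega$ only guarantees that \emph{some} factorization extends forever. Concretely, take the B\"uchi automaton with states $q$ (initial and accepting) and $s$, and transitions $q \xrightarrow{a} q$, $q \xrightarrow{a} s$, $s \xrightarrow{b} q$. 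Your decomposition yields the single summand $U V^\omega$ with $V = V_{q,q}$ the set of nonempty concatenations of the loops $a$ and $ab$, and $U = V \cup \set{\epsilon}$. The word $w = (ab)^\omega$ is accepted, yet your formula fails on it: from any even position of $w$ the one-letter block $a \in V$ is a legal $r_V^*$-match leading to a position whose suffix is $b(ab)^\omega$, and since every word in $V$ starts with $a$, the inner diamond $\ddiamond{r_V}\ttrue$ fails there; from any odd position the box already fails at the empty match. So the box-diamond subformula fails at \emph{every} position, no choice for the outer diamond helps, and $w \not\models \varphi$ although $w \in L$. Note that the direction you prove carefully (formula implies membership, via the strictly increasing sequence of matches) is correct, so your formula defines a sound but in general \emph{proper} subset of $L$; it is completeness that breaks, and your assessment of which direction is routine is inverted.

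This is not a local slip but the classical obstruction to expressing $\omega$-iteration through a box-over-star, and it is why the literature does not take this route. The paper itself gives no direct proof: it cites the result and notes that it follows by translating ETL$_f$ (Vardi--Wolper), which is known to capture exactly the $\omega$-regular languages, into $\ldl$ --- translating an automaton connective into a diamond with tests is the easy part, and the hard part ($\omega$-regular containment in ETL$_f$) is outsourced to that prior work. To repair your argument you would need an encoding with an extendability invariant, i.e., one in which every partial factorization the box can select is guaranteed to admit a continuation (for instance via determinization or Ramsey-type congruence classes), which is a genuinely different and substantially harder argument than the routine unwinding you anticipate.
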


Note that we define $\pldl$ formulas to be in negation normal form.
Nevertheless, a negation can be pushed to the atomic propositions using dualities allowing us to define the negation of a
formula.

\begin{lemma}
\label{lemma_pldlnegation}
For every $\pldl$ formula~$\varphi$ there exists an efficiently constructible
$\pldl$
formula~$\neg \varphi$ s.t.\
\begin{enumerate}
\item $(w,n,\alpha)\models \varphi$ if and only if $(w,n,
\alpha) \not\models \neg \varphi$,
\item $\card{\neg \varphi} =  \card{\varphi}$.
\item If $\varphi$ is well-formed, then so is $\neg \varphi$.
and vice versa.
\end{enumerate}
\end{lemma}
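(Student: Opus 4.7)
The plan is to proceed by structural induction on $\varphi$, syntactically pushing the negation inward until it reaches the atomic propositions. Concretely, I would define $\neg\varphi$ by the dualities
\begin{align*}
\neg p &\coloneq \neg p, & \neg(\neg p) &\coloneq p,\\
\neg(\psi_0 \wedge \psi_1) &\coloneq \neg\psi_0 \vee \neg\psi_1, &
\neg(\psi_0 \vee \psi_1) &\coloneq \neg\psi_0 \wedge \neg\psi_1,\\
\neg\ddiamond{r}\psi &\coloneq \bbox{r}\neg\psi, &
\neg\bbox{r}\psi &\coloneq \ddiamond{r}\neg\psi,\\
\neg\ddiamond{r}_{\le z}\psi &\coloneq \bbox{r}_{\le z}\neg\psi, &
\neg\bbox{r}_{\le z}\psi &\coloneq \ddiamond{r}_{\le z}\neg\psi.
\end{align*}
Crucially, in the four modal clauses the regular expression $r$ is copied verbatim; the recursion never descends into the tests appearing inside $r$.

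For item~1, a straightforward structural induction does the job. The Boolean cases are immediate from the induction hypothesis, and for the modal cases one uses that $\Rexp(r,w,\alpha)$ is, by construction, literally the same on both sides of the equivalence, so the classical $\exists$/$\forall$ duality combined with the induction hypothesis applied to $\psi$ yields $(w,n,\alpha)\models\varphi$ iff $(w,n,\alpha)\not\models\neg\varphi$. The parameterized cases differ only by the extra constraint $0 \le j \le \alpha(z)$. The ``and vice versa'' clause is then clear, since applying $\neg$ twice syntactically returns $\varphi$. Item~2 follows by the same induction: each clause introduces exactly one connective, touches no regular expression, and recurses on strictly smaller subformulas, so $\card{\cl(\neg\varphi)} = \card{\cl(\varphi)}$ and the total length of regular expressions is preserved.

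Item~3 is the most delicate point. The construction swaps the $\ddiamond{\cdot}/\bbox{\cdot}$ roles at every modal position that is \emph{not} below a test, while copying tests verbatim, so for variables introduced at outer parameterized operators the set membership flips between $\vardiamond$ and $\varbox$, whereas for variables occurring only inside tests the role is preserved. A brief case distinction on the positions at which a variable can occur then shows that the disjointness $\vardiamond(\neg\varphi) \cap \varbox(\neg\varphi) = \emptyset$ is inherited from the corresponding property of $\varphi$. The main obstacle throughout the argument is exactly this interaction between propositional atoms and test atoms inside regular expressions; the design choice that makes everything go through is to leave $\Rexp(r,w,\alpha)$ invariant under negation, so that the induction needs to handle only the outer modal layers.
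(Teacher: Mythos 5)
Your construction is literally the paper's: the same eight dualization rules, with $r$ (and hence every test inside it) copied verbatim, and negation pushed only through the outer modal structure. Your treatment of items~1 and~2 also matches the paper's (the paper likewise dismisses them as a straightforward induction plus size accounting), and these parts are fine: since $\Rexp(r,w,\alpha)$ is identical on both sides, the $\exists/\forall$ duality together with the induction hypothesis on $\psi$ gives item~1, and each rule preserves the closure size and the regular expressions, giving item~2.

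The gap is in item~3, and it sits exactly in the case your case distinction skips: a variable that occurs \emph{both} at an outer parameterized operator \emph{and} inside a test. Recall that test formulas are subformulas, so they contribute to $\vardiamond$ and $\varbox$. Take
\[
\varphi \;=\; \ddiamond{(\ddiamond{b^*}_{\le x} q)? \conc a}_{\le x}\, p .
\]
Then $\cl(\varphi) = \set{\varphi,\ \ddiamond{b^*}_{\le x} q,\ q,\ p}$, so $\vardiamond(\varphi) = \set{x}$ and $\varbox(\varphi) = \emptyset$, i.e., $\varphi$ is well-formed. Your negation is $\neg\varphi = \bbox{(\ddiamond{b^*}_{\le x} q)? \conc a}_{\le x} \neg p$, whose closure still contains the unchanged test formula $\ddiamond{b^*}_{\le x} q$; hence $x \in \vardiamond(\neg\varphi) \cap \varbox(\neg\varphi)$, and $\neg\varphi$ is \emph{not} well-formed. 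So the claim that \quot{disjointness is inherited} is false for this construction, and no brief case distinction can establish it. Nor is there an easy patch: also negating the tests would change $\Rexp(r,w,\alpha)$ and destroy item~1, and renaming the variable inside the test breaks item~1 as well, because the equivalence must hold for one and the same valuation~$\alpha$ on both sides. To be fair, the paper shares this defect --- its proof declares items~2 and~3 immediate from the definition, and nothing downstream breaks because the lemma is later applied only to variable-free ($\ldl$/$\ldlt$) formulas, for which item~3 is vacuous --- but as a proof of the lemma as stated, your item-3 argument asserts something the construction does not deliver.
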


\begin{proof}
We construct $\neg \varphi$ by structural induction over $\varphi$ using the dualities of the operators:\medskip

\hspace{-.6cm}\begin{minipage}[b]{.4\textwidth}
	\begin{itemize}
		\item $\neg (p) = \neg p$
		\item $\neg (\varphi \wedge \psi) = (\neg \varphi) \vee (\neg \psi)$
		\item $\neg (\ddiamond{r}\varphi) = \bbox{r}\neg \varphi$
		\item $\neg (\ddiamond{r}_{\le x}\varphi) = \bbox{r}_{\le x}\neg \varphi$
	\end{itemize}
	\end{minipage}
\begin{minipage}[b]{.4\textwidth}
	\begin{itemize}
		\item $\neg (\neg p) = p$
		\item $\neg (\varphi \vee \psi) = (\neg \varphi) \wedge (\neg \psi)$
		\item $\neg (\bbox{r}\varphi) = \ddiamond{r}\neg \varphi$
		\item $\neg (\bbox{r}_{\le y}\varphi) = \ddiamond{r}_{\le y}\neg \varphi$
	\end{itemize}
	\end{minipage}\medskip

The latter two claims of Lemma~\ref{lemma_pldlnegation} follow from the definition of $\neg \varphi$ while the first one can be shown by a straightforward structural induction over $\varphi$.
\end{proof}

A simple, but very useful property of $\pldl$ is the monotonicity of the
parameterized operators: increasing (decreasing) the values of parameters bounding diamond (box) operators preserves satisfaction.  

\begin{lemma}
\label{lemma_monotonicity}
Let $\varphi$ be a $\pldl$ formula and let $\alpha$ and
$\beta$ be variable valuations satisfying $\beta ( x) \ge \alpha ( x )$ for
every $x \in \vardiamond( \varphi)$ and $\beta ( y) \le \alpha ( y )$ for every $y \in \varbox( \varphi)$. If $(w, \alpha) \models \varphi$, then $(w, \beta) \models
\varphi$.
\end{lemma}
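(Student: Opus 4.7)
The plan is to prove the lemma by structural induction on $\varphi$, strengthening the statement to arbitrary positions $n \in \nats$ (so that it applies uniformly to subformulas). The base cases $p$ and $\neg p$ are immediate because satisfaction of a literal at position $n$ depends only on $w_n$ and not on the valuation, and the Boolean cases $\psi_0 \wedge \psi_1$ and $\psi_0 \vee \psi_1$ reduce directly to the induction hypothesis applied to the two subformulas, each of which inherits well-formedness from $\varphi$.

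For the parameterized diamond case $\ddiamond{r}_{\le x}\psi$, I would extract from the hypothesis a witness $j \le \alpha(x)$ with $(n, n+j) \in \Rexp(r, w, \alpha)$ and $(w, n+j, \alpha) \models \psi$. Since $x \in \vardiamond(\varphi)$ gives $\beta(x) \ge \alpha(x)$, the same $j$ satisfies $j \le \beta(x)$; the induction hypothesis on $\psi$ delivers $(w, n+j, \beta) \models \psi$; and a parallel monotonicity claim for $\Rexp$, established by induction on the structure of the regex, transfers the match from $\alpha$ to $\beta$. The unbounded diamond $\ddiamond{r}\psi$ is identical without the bound. The box cases $\bbox{r}_{\le y}\psi$ and $\bbox{r}\psi$ are dual: one fixes any $j \le \beta(y) \le \alpha(y)$ whose match holds under $\beta$, transfers the match back to $\alpha$, invokes the universal hypothesis to obtain $(w, n+j, \alpha) \models \psi$, and then concludes with the induction hypothesis on $\psi$.

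The main obstacle is the auxiliary monotonicity claim for the match relation $\Rexp(r, w, \cdot)$, which must be proved simultaneously with the main lemma by a coupled induction on formulas and regular expressions. The regex constructors $\phi$, $r_0 + r_1$, $r_0 \conc r_1$, and $r^*$ unfold routinely by the defining equations of $\Rexp$, but the test case $\chi?$ requires the main induction hypothesis applied to the $\pldl$ subformula $\chi$. Here it is essential that well-formedness of $\varphi$ is inherited by every $\chi$ appearing under a test, so that $\vardiamond(\chi) \subseteq \vardiamond(\varphi)$ and $\varbox(\chi) \subseteq \varbox(\varphi)$, and the polarity assumptions on $\alpha$ and $\beta$ propagate without modification to the recursive call on $\chi$. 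With the two inductions interleaved in lockstep, every case of the original induction closes.
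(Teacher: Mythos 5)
The paper states this lemma without any proof (it is presented as a \quot{simple} property), so your attempt has to stand on its own. Its skeleton---induction over all positions $n$, coupled with an auxiliary transfer claim for $\Rexp(r,w,\cdot)$---is the natural one, and the atomic, Boolean, and diamond cases are fine. The genuine gap is in the box case, and it is exactly where you wave your hands. For $\ddiamond{r}\psi$ you need the inclusion $\Rexp(r,w,\alpha) \subseteq \Rexp(r,w,\beta)$ (transfer a match from $\alpha$ to $\beta$), but for $\bbox{r}\psi$ you need the \emph{converse} inclusion $\Rexp(r,w,\beta) \subseteq \Rexp(r,w,\alpha)$ (transfer the match \quot{back}). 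These two requirements are incompatible as soon as $r$ contains a test $\chi?$ whose formula $\chi$ has parameterized subformulas: the main induction hypothesis gives you $(w,m,\alpha)\models\chi \Rightarrow (w,m,\beta)\models\chi$, i.e., only the $\alpha$-to-$\beta$ direction, whereas the box case needs $(w,m,\beta)\models\chi \Rightarrow (w,m,\alpha)\models\chi$, which would require the lemma with the roles of $\alpha$ and $\beta$ swapped---a hypothesis you do not have. Your claim that the polarity assumptions \quot{propagate without modification to the recursive call on $\chi$} is precisely what fails: a test occurring in the regular expression of a box operator occurs with \emph{flipped} polarity, because the match relation sits in the antecedent of the universal quantification.

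This is not a repairable presentation issue, because with the paper's definitions the statement is actually false for such formulas. Consider $\varphi = \bbox{(\ddiamond{\ttrue^*}_{\le x}\, q)?}\,\ffalse$. Since test formulas belong to $\cl(\varphi)$, we have $\vardiamond(\varphi)=\set{x}$ and $\varbox(\varphi)=\emptyset$, so $\varphi$ is well-formed. Semantically, $\bbox{\chi?}\ffalse$ expresses $\neg\chi$ (the classical PDL trick), so $(w,n,\alpha)\models\varphi$ if and only if $q \notin w_n \cup \cdots \cup w_{n+\alpha(x)}$. Taking $w = \emptyset\,\set{q}\,\emptyset^\omega$, $\alpha(x)=0$, and $\beta(x)=1$ yields $(w,\alpha)\models\varphi$ but $(w,\beta)\not\models\varphi$, even though $\beta(x)\ge\alpha(x)$: satisfaction is \emph{antitone} in $x$ here. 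Consequently, any correct proof must add a restriction that the paper leaves implicit---e.g., that parameterized operators occur only in positive polarity, in particular not inside tests lying in the regular expression of a box operator (equivalently, one must make $\vardiamond$ and $\varbox$ polarity-aware). Under that restriction your induction does close, since every test whose truth value can depend on the valuation then sits under a diamond, where the induction hypothesis supplies the inclusion in the direction you need.
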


The previous lemma allows us to eliminate parameterized box operators when asking for the existence of a variable valuation satisfying a formula. 

\begin{lemma}
	\label{lemma_removeboxes}
For every $\pldl$ formula~$\varphi$ there is an efficiently constructible $\pldldiamond$ formula~$\varphi'$ of the same size as $\varphi$ such that
\begin{enumerate}
	
	\item for every $\alpha$ there is an $\alpha'$ such that for all $w$: if $(w, \alpha) \models \varphi$ then $(w, \alpha') \models \varphi'$, and
	
	\item for every $\alpha'$ there is an $\alpha$ such that for all $w$: if $(w, \alpha') \models \varphi'$ then $(w, \alpha) \models \varphi$.	

\end{enumerate}
\end{lemma}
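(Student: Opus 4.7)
The plan is to exploit monotonicity (Lemma~\ref{lemma_monotonicity}) to reduce the existence of a satisfying valuation of $\varphi$ to the case where every box variable is assigned~$0$, and then to express the resulting ``$\alpha(y)=0$'' semantics of each $\bbox{r}_{\le y}\psi$ as a $\pldldiamond$ construct. When $\alpha(y)=0$, only $j=0$ needs to be inspected, so the operator simplifies to the implication: if $(n,n) \in \Rexp(r,w,\alpha)$ then $(w,n,\alpha)\models \psi$. Thus the key step is expressing the regex-emptiness predicate $(n,n) \in \Rexp(r,w,\alpha)$ as a $\pldl$ formula.

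For this, I would introduce, by induction on the regex structure, a formula $\mathrm{null}(r)$ via $\mathrm{null}(\phi) = \ffalse$ for propositional $\phi$, $\mathrm{null}(\chi?) = \chi$, $\mathrm{null}(r_0+r_1) = \mathrm{null}(r_0) \vee \mathrm{null}(r_1)$, $\mathrm{null}(r_0\conc r_1) = \mathrm{null}(r_0) \wedge \mathrm{null}(r_1)$, and $\mathrm{null}(r^*) = \ttrue$. A routine induction on $r$ shows $(w,n,\alpha) \models \mathrm{null}(r)$ iff $(n,n) \in \Rexp(r,w,\alpha)$. Consequently, $\bbox{r}_{\le y}\psi$ evaluated at $\alpha(y)=0$ is semantically equivalent to $\bbox{(\mathrm{null}(r))?}\psi$, which is an unparameterized box over a test regex, hence a legal $\pldldiamond$ operator provided $\mathrm{null}(r)$ and $\psi$ are themselves $\pldldiamond$. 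I would then define $\varphi'$ by structural induction: Boolean atoms and connectives commute with the construction, the unparameterized and parameterized diamond operators recurse into their arguments and the test subformulas of their regex, and each $\bbox{r}_{\le y}\psi$ is replaced by $\bbox{(\mathrm{null}(r'))?}\psi'$, where $r'$ is $r$ with all test subformulas recursively transformed. By construction $\varphi'$ contains no parameterized box, so $\varphi' \in \pldldiamond$, and well-formedness is trivially preserved.

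For correctness, a single structural induction on $\varphi$ shows that whenever $\alpha(y)=0$ holds for every $y \in \varbox(\varphi)$, we have $(w,n,\alpha) \models \varphi$ iff $(w,n,\alpha') \models \varphi'$, where $\alpha'$ is $\alpha$ restricted to $\vardiamond(\varphi)$. Condition~1 then follows from Lemma~\ref{lemma_monotonicity}, which permits shrinking all box variables to~$0$ without losing satisfaction; condition~2 follows by extending any given valuation $\alpha'$ of $\varphi'$ with $\alpha(y)=0$ for each $y \in \varbox(\varphi)$. The main obstacle I anticipate is the size bookkeeping: the replacement trades the regex-length contribution of $r$ for new Boolean subformulas of $\mathrm{null}(r')$, one per regex operator of $r$, while reusing the test subformulas already counted in $\cl(\varphi)$. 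One must argue per replaced parameterized box that this shifts content between the two summands of the size measure without increasing the total, which is what justifies the stated ``same size'' claim.
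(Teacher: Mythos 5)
Your proposal is correct and follows essentially the same route as the paper: use monotonicity to reduce to valuations assigning~$0$ to all box variables, characterize the condition $(n,n) \in \Rexp(r,w,\alpha)$ by a single test, and replace each $\bbox{r}_{\le y}\psi$ by an unparameterized box over that test. Your inductive $\mathrm{null}(r)$ is just a direct, compositional formulation of the paper's rewriting rules (star $\mapsto \ttrue?$, propositional atoms $\mapsto \ffalse$ under concatenation, tests combined by $\vee$ and $\wedge$), which compute exactly the same nullability formula.
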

 
\begin{proof}
We construct a single test~$\hat{r}$ such that $\Rexp(r, w, \alpha) \cap \set{ (n,n) \mid n\in\nats} = \Rexp(\hat{r}, w, \alpha)$ for every $w$ and every $\alpha$, which suffices to prove the equivalence of $\bbox{r}_{\le y}\psi$ and $\bbox{\hat{r}}\psi$ provided we have $\alpha(y) = 0$, which is sufficient due to monotonicity. We apply the following rewriting rules (in the given order) to $r$:
\begin{enumerate}
	\item Replace every subexpression of the form~$r'^*$ by $\ttrue?$, until no longer applicable.
	\item Replace every subexpression of the form~$\phi \conc r'$ or $r' \conc \phi$ by $\ffalse?$ and replace every subexpression of the form~$\phi + r'$ or $r' + \phi$ by $r'$, where $\phi$ is a propositional formula,  until no longer applicable.
	\item Replace every subexpression of the form~$\psi_0? + \psi_1?$ by $(\psi_0 \vee \psi_1)?$ and replace every subexpression of the form~$\psi_0?\conc \psi_1?$ by $(\psi_0 \wedge \psi_1)?$, until no longer applicable.
\end{enumerate}
After step~2, $r$ contains no  iterations and no propositional atoms unless the expression itself is one. In the former case, applying the last two rules yields a regular expression which is a single test, which we denote by $\hat{r}$. In the latter case, we define $\hat{r} = \ffalse?$. 

Each rewriting step preserves the intersection~$\Rexp(r, w, \alpha) \cap \set{ (n,n) \mid n\in\nats}$. As $\hat{r}$ is a test, we conclude $\Rexp(r, w, \alpha) \cap \set{ (n,n) \mid n\in\nats} = \Rexp(\hat{r}, w, \alpha)$ for every $w$ and every $\alpha$. Note that $\hat{r}$ can be efficiently computed from $r$ and is of the same size as $r$.
Now, replace every subformula~$\bbox{r}_{\le y}\psi$ of $\varphi$ by $\bbox{\hat{r}}\psi$ and denote the formula obtained by $\varphi'$, which is a $\pldldiamond$ formula that is efficiently constructible and of the same size. 

Given an $\alpha$, we define $\alpha_0$ by $\alpha_0(z) = \alpha(z)$, if $z \in \vardiamond(\varphi)$ and $\alpha_0(z) = 0$ otherwise. If $(w, \alpha) \models \varphi$, then $(w, \alpha_0) \models \varphi$ due to monotonicity. By construction of $\varphi'$, we also have $(w, \alpha_0)\models \varphi'$.
On the other hand, if $(w, \alpha') \models \varphi'$, then $(w, \alpha'_0) \models \varphi'$ as well, where $\alpha'_0$ is defined as above. By construction of $\varphi'$, we conclude $(w, \alpha_0) \models \varphi$. 
\end{proof}

\subsection{The Alternating Color Technique and LDL$\mathbf{_{cp}}$}
\label{subsec_altcolor}
In this subsection, we repeat the alternating color technique, which was introduced by Kupferman et al.\ to solve the model checking and the realizability problem for $\prompt$, amongst others. Let $p \notin P$ be a fresh proposition and define $P' = \pow{P\cup \set{p}}$. We think of words in $(\pow{P'})^\omega$ as colorings of words in $(\pow{P})^\omega$, i.e., $w' \in (\pow{P'})^\omega$ is a coloring of $w \in (\pow{P})^\omega$, if we have ${w_n}' \cap P = w_n$ for every position~$n$. Furthermore, $n$ is a changepoint, if $n= 0$ or if the truth value of $p$ differs at positions~$n-1$ and $n$. A block is a maximal infix that has exactly one changepoint, which is at the first position of the infix. By maximality, this implies that the first position after a block is a changepoint. Let $k\ge 1$. We say that $w'$ is $k$-bounded, if every block has length at most $k$, which implies that $w'$ has infinitely many changepoints. Dually, $w'$ is $k$-spaced, if it has infinitely many changepoints and every block has length at least $k$. 

The alternating color technique replaces a parameterized diamond operator~$\ddiamond{r}_{\le x}\psi$ by an unparameterized one that requires the formula~$\psi$ to be satisfied within at most one color change. To this end, we introduce a changepoint-bounded variant~$\ddiamond{\cdot}_{cp}$ of the diamond operator. Since we need the dual operator~$\bbox{\cdot}_{cp}$ to allow for negation via dualization, we introduce it here as well.
We define
\begin{itemize}
	\item $(w, n, \alpha) \models \ddiamond{r}_{cp}\psi'$ if there exists a $j \in \nats$ s.t.\ $(n, n+j) \in \Rexp(r, w, \alpha)$, $w_n \cdots w_{n+j-1}$ contains at most one changepoint, and $(w, n + j, \alpha) \models \psi$, and

	\item $(w, n, \alpha) \models \bbox{r}_{cp}\psi'$ if for all $j \in \nats$ with $(n, n+j) \in \Rexp(r, w, \alpha)$ and where $w_n \cdots w_{n+j-1}$ contains at most one changepoint we have $(w, n + j, \alpha) \models \psi$.

\end{itemize}

We denote the logic obtained by disallowing parameterized operators, but allowing changepoint-bounded operators, by $\ldlt$. Note that the semantics of $\ldlt$ formulas are independent of variable valuations. Hence, we drop them from our notation for the satisfaction relations~$\models$ and $\Rexp$. Also, Lemma~\ref{lemma_pldlnegation} can be extended to $\ldlt$ by adding the rules $\neg(\ddiamond{r}_{cp} \psi) = \bbox{r}_{cp}\neg \psi$ and $\neg(\bbox{r}_{cp} \psi) = \ddiamond{r}_{cp}\neg \psi$ to the proof.

Now, we are ready to introduce the alternating color technique. Given a $\pldldiamond$ formula~$\varphi$, let $\rel{\varphi}$ be the formula obtained by inductively replacing every subformula~$\ddiamond{r}_{\le x}\psi$ by $\ddiamond{\rel{r}}_{cp}\rel{\psi}$, i.e., we replace the parameterized diamond operator by a changepoint-bounded one. Note that this replacement is also performed in the regular expressions, i.e., $\rel{r}$ is the regular expression obtained by applying the replacement to every test $\psi'?$ in $r$. 

Given a $\pldldiamond$ formula~$\varphi$ let $c(\varphi) = \rel{\varphi} \wedge \theta_{\infty p} \wedge \theta_{\infty \neg p}$ (cf.~Example~\ref{ex_pldl}),
which is an $\ldlt$ formula and only linearly larger than $\varphi$. On $k$-bounded and $k$-spaced colorings of $w$ there is an equivalence between $\varphi$ and $c(\varphi)$. The proof is similar to the original one~\cite{KupfermanPitermanVardi09}.

\begin{lemma}[cp. Lemma~2.1 of \cite{KupfermanPitermanVardi09}]
\label{lemma_altcolor}
Let $\varphi$ be a $\pldldiamond$ formula and let $w \in (\pow{P})^\omega$.
\begin{enumerate}
\item\label{lemma_alternatingcolor_pldltoldl} 
If $(w, \alpha) \models \varphi$, then $w' \models c(\varphi)$ for every $k$-spaced coloring $w'$ of $w$, where $k = \max_{x \in \var(\varphi)}\alpha(x)$.

\item\label{lemma_alternatingcolor_ldldtopldl}
Let $k \in \nats$. If $w'$ is a $k$-bounded coloring of $w$ with $w' \models c(\varphi)$, then $(w, \alpha) \models \varphi$, where $\alpha(x) = 2k$ for every $x$.
\end{enumerate}
\end{lemma}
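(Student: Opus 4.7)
The plan is to prove both parts simultaneously by structural induction on the $\pldldiamond$ formula $\varphi$, carried out in parallel with an induction on the regular expressions $r$ that may appear inside $\varphi$ (tests inside $r$ refer back to subformulas, so the two inductions are genuinely mutual). The induction statement packages, for every subformula~$\psi$ of $\varphi$, every subexpression~$r$ of $\varphi$, and every position~$n$: (a) $(w,n,\alpha)\models\psi$ implies $(w',n)\models\rel{\psi}$ under the $k$\nobreakdash-spaced hypothesis of part~1, and conversely $(w',n)\models\rel{\psi}$ implies $(w,n,\alpha)\models\psi$ (with $\alpha(x)=2k$) under the $k$\nobreakdash-bounded hypothesis of part~2; (b) the corresponding equivalence between membership in $\Rexp(r,w,\alpha)$ and in $\Rexp(\rel{r},w')$.

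A preliminary observation disposes of the two extra conjuncts: both $k$\nobreakdash-spaced and $k$\nobreakdash-bounded colorings by definition have infinitely many changepoints, so they satisfy $\theta_{\infty p}$ and $\theta_{\infty\neg p}$ automatically. Hence it suffices to prove the equivalence between $\varphi$ and $\rel{\varphi}$ under the respective hypotheses. The boolean cases, the unparameterized $\ddiamond{r}\psi$ and $\bbox{r}\psi$ cases, and the recursive cases for regular expressions ($r_0+r_1$, $r_0\conc r_1$, $r^*$, tests~$\psi?$, and propositional atoms~$\phi$) are routine: the last two depend only on the induction hypothesis for the embedded subformula and on the fact that $w'_n\cap P=w_n$, and the others are purely set-theoretic combinations handled by the induction hypothesis on $r_0,r_1,r$.

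The interesting case is $\ddiamond{r}_{\le x}\psi$. For part~1, if $(w,n,\alpha)\models\ddiamond{r}_{\le x}\psi$, pick a witness $j\le\alpha(x)\le k$. The infix $w'_n\cdots w'_{n+j-1}$ has length at most $k$, and since $w'$ is $k$\nobreakdash-spaced every block has length at least $k$, so this infix contains at most one changepoint. The induction hypothesis applied to $r$ and $\psi$ then yields $(w',n)\models\ddiamond{\rel{r}}_{cp}\rel{\psi}$. For part~2, a witness $j$ for $\ddiamond{\rel{r}}_{cp}\rel{\psi}$ at position~$n$ yields an infix $w'_n\cdots w'_{n+j-1}$ with at most one changepoint; such an infix is contained in at most two consecutive blocks of $w'$, each of length at most~$k$ by $k$\nobreakdash-boundedness, so $j\le 2k=\alpha(x)$. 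The induction hypothesis then delivers $(w,n,\alpha)\models\ddiamond{r}_{\le x}\psi$.

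The main obstacle I expect is bookkeeping rather than any deep argument: because $\rel{\cdot}$ is defined recursively through the tests inside regular expressions, the induction must be organized so that, at the point we use the hypothesis on a regular subexpression~$r$, its tests (which may themselves contain parameterized diamonds nested arbitrarily deep) are already covered. Stating the induction as a joint statement over $\cl(\varphi)$ together with all regular subexpressions appearing in formulas of $\cl(\varphi)$, and ordering the induction by the standard structural measure $|\cdot|$, resolves this cleanly; the geometric lemma that an infix with at most one changepoint spans at most two blocks is then the only genuinely new ingredient and is immediate from the definitions of block and changepoint.
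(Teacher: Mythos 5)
Your overall strategy---a mutual structural induction on formulas and regular expressions, with the two geometric observations (an infix of length at most $k$ in a $k$-spaced word contains at most one changepoint; an infix with at most one changepoint in a $k$-bounded word has length at most $2k$)---is exactly the intended extension of Lemma~2.1 of \cite{KupfermanPitermanVardi09}, to which the paper itself defers without giving details, and your handling of the case $\ddiamond{r}_{\le x}\psi$ is correct. The gap lies in the cases you dismiss as routine. Your induction statement is internally inconsistent: part~(a) provides only \emph{one} implication per hypothesis (left-to-right under $k$-spacedness, right-to-left under $k$-boundedness), while part~(b) claims an \emph{equivalence} between membership in $\Rexp(r,w,\alpha)$ and in $\Rexp(\rel{r},w')$; establishing that equivalence for an expression containing a test $\psi'?$ requires \emph{both} implications of (a) for $\psi'$, which you do not have. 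Moreover, the equivalence (not just the inclusion matching the direction of (a)) is genuinely needed: to prove part~1 for $\bbox{r}\psi$ you must show that every $j$ with $(n,n+j)\in\Rexp(\rel{r},w')$ satisfies $(w',n+j)\models\rel{\psi}$, whereas the hypothesis $(w,n,\alpha)\models\bbox{r}\psi$ only speaks about those $j$ with $(n,n+j)\in\Rexp(r,w,\alpha)$; so you need $\Rexp(\rel{r},w')\subseteq\Rexp(r,w,\alpha)$, i.e., for tests the implication $(w',m)\models\rel{\psi'}\Rightarrow(w,m,\alpha)\models\psi'$. Under $k$-spacedness this implication is false: blocks need only have length \emph{at least} $k$ and may be arbitrarily long, so a changepoint-bounded witness for $\rel{\psi'}$ need not lie within $\alpha(x)$ steps.

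This is not a bookkeeping issue that a cleverer induction order can resolve: the statement itself fails when a parameterized diamond occurs inside a test guarding a box. Consider $\varphi=\bbox{(\ddiamond{\ttrue^*}_{\le x}b)?}q$, which is a well-formed $\pldldiamond$ formula according to Definition~\ref{def_wellformedformula}. Let $w$ contain $b$ exactly at position $100$ and $q$ nowhere, and let $\alpha(x)=1$, so $k=1$. Then $(w,\alpha)\models\varphi$ vacuously, since $b$ holds at neither position $0$ nor $1$. Now let $w'$ be the coloring of $w$ whose changepoints are exactly the multiples of $200$; it is $1$-spaced. The infix $w'_0\cdots w'_{99}$ contains only the changepoint at position $0$, so $(w',0)\models\ddiamond{\ttrue^*}_{cp}b$, and since $q$ fails at position $0$ we get $w'\not\models\bbox{(\ddiamond{\ttrue^*}_{cp}b)?}q=\rel{\varphi}$, contradicting part~1. (A symmetric example---$b$ only at position $2$, $q$ nowhere, $w'$ strictly alternating---refutes part~2, and the same formula already violates Lemma~\ref{lemma_monotonicity}, on which the surrounding results rely.) The root cause is polarity: a test under a box occurs negatively, so a diamond bounded by $x$ inside it is effectively universal, which Definition~\ref{def_wellformedformula} does not exclude. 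Your proof (and the lemma) can only be saved by a polarity-aware restriction, e.g., requiring that parameterized operators occur in tests only in positive-polarity positions, or that tests inside box-guarded expressions be parameter-free; in the latter case $\rel{\psi'}=\psi'$ for the relevant tests and the problematic inclusion holds because satisfaction of such tests depends only on the projection $w'_n\cap P=w_n$.
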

 
\section{From LDL$\mathbf{_{cp}}$ to Alternating Büchi Automata}
\label{sec_automata}
In this section, we show how to translate $\ldlt$ formulas into alternating B\"uchi word automata of linear size using an inductive bottom-up approach. These automata allow us to use automata-based constructions to solve the model checking and the realizability problem for $\pldl$ via the alternating color technique which links $\pldl$ and $\ldlt$.

An alternating Büchi automaton~$\aut = (Q,\Sigma,q_0,\delta, F)$ consists of a finite set~$Q$ of states, an alphabet~$\Sigma$, an initial state~$q_0 \in Q$, a transition function~$\delta \colon Q \times \Sigma \to \bplus(Q)$, and a set~$F \subseteq Q$ of accepting states. 
Here, $\bplus(Q)$ denotes the set of positive boolean combinations over $Q$, which contains in particular the formulas $\ttrue$ (true) and $\ffalse$ (false).  A run of $\aut$ on $w = w_0 w_1 w_2 \cdots \in \Sigma^\omega$ is a directed graph $\rho = (V, E)$ with $V \subseteq Q \times \nats$ and $((q,n),(q',n')) \in E$ implies $n' = n +1$ such that the following two conditions are satisfied: $(q_0, 0) \in V$ and for all $(q, n) \in V$: $\suc{
\rho}{(q,n)} \models \delta(q, w_n)$. Here $\suc{\rho}{(q,n)}$ denotes the set of successors of $(q,n)$ in $\rho$ projected to $Q$. A run~$\rho$ is accepting if all infinite paths (projected to $Q$) through $\rho$ visit $F$ infinitely often. The language~$L(\aut)$ contains all $w \in \Sigma^\omega$ that have an accepting run of $\aut$.

\begin{theorem}
	\label{theorem_autconstruction}
For every $\ldlt$ formula~$\varphi$, there is an alternating B\"uchi automaton~$\aut_\varphi$ with linearly many states (in $\card{\varphi}$) such that $L(\aut_\varphi) = \set{w \in (\pow{P'})^\omega \mid w \models \varphi }$.
\end{theorem}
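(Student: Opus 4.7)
The plan is to proceed by structural induction on $\varphi$, assembling $\aut_\varphi$ bottom-up so that each subformula $\psi \in \cl(\varphi)$ contributes a dedicated state $q_\psi$ and each subexpression of each regex in $\varphi$ contributes $O(1)$ states. Since $\card{\varphi}$ already counts both subformulas and the lengths of all occurring regexes with multiplicity, the state set is then linear in $\card{\varphi}$. The alphabet is $\pow{P'}$, the initial state is $q_\varphi$, and the base cases are direct: $\delta(q_p, a) = \ttrue$ if $p \in a$ and $\ffalse$ otherwise, and $\delta(q_{\psi_0 \wedge \psi_1}, a) = \delta(q_{\psi_0},a) \wedge \delta(q_{\psi_1},a)$, dually for disjunction and negated literal, all legitimately formed in $\bplus(Q)$ without introducing new states.

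The substantive cases are the four modalities $\ddiamond{r}\psi$, $\bbox{r}\psi$, $\ddiamond{r}_{cp}\psi$, $\bbox{r}_{cp}\psi$. For $\ddiamond{r}\psi$, I reuse $\aut_\psi$ (its initial state $q_\psi$) and add one state $q_s$ per subexpression $s$ of $r$, intended to assert \quot{a prefix starting at the current position matches $s$, after which $\psi$ holds}. Transitions are defined by recursion over $s$: a propositional atom $\phi$ consumes a letter $a$ and hands off to $q_\psi$ if $a \models \phi$, else to $\ffalse$; concatenation $s_1 \conc s_2$ fires $q_{s_1}$'s transition with its handoff retargeted to $q_{s_2}$; choice $s_1 + s_2$ takes the disjunction of both transitions; and star $s^*$ combines a self-loop with the option to skip directly to $q_\psi$. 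Tests $\psi'?$ do not consume a letter and are compiled away by folding $\delta(q_{\psi'}, a)$ into the transition formula of their regex-successor via conjunction in $\bplus(Q)$, sharing the already-built state $q_{\psi'}$ without duplication. The box cases are obtained by dualization (conjunctions in place of disjunctions, inverted handoff). For the $cp$-bounded variants, each regex-state is further duplicated with a bit recording whether a changepoint has occurred during the current matching attempt, together with bookkeeping of the previous $p$-value needed to detect a fresh changepoint on the next letter; the bit is permitted to flip at most once, which is a constant-factor blowup and preserves linearity.

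Correctness that $L(\aut_\varphi) = \set{w \in (\pow{P'})^\omega \mid w \models \varphi}$ then follows by a routine induction matching the inductive clauses of $\Rexp$ and the modal semantics to the step-by-step unfolding of $\aut_\varphi$'s transitions. The acceptance set $F$ excludes the regex-progress states, in particular those built for starred subexpressions in a diamond context, so that infinite unrolling of $r^*$ alone cannot witness a diamond, while states corresponding to boxed, safety-like subformulas are kept in $F$; this mirrors the standard LTL-to-ABW recipe. The main obstacle I expect is the clean treatment of tests: they are non-consuming while ABW transitions must consume a letter, and tests may occur under stars so that arbitrarily many may need to fire within a single ABW step. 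The resolution is precisely to push test-transitions into the next consuming transition via the alternating structure $\bplus(Q)$, which keeps the construction strictly bottom-up and linear, in contrast to the top-down alternating construction of De Giacomo and Vardi.
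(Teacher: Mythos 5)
Your construction follows essentially the same route as the paper's: a bottom-up induction with one (shared) automaton per subformula, a linear-size compilation of each regular expression into automaton states (your recursion over subexpressions with retargeted handoffs is an on-the-fly version of the paper's Thompson construction followed by $\epsilon$-elimination), tests discharged by conjoining the test automaton's transition into the next letter-consuming transition, a B\"uchi set chosen so that regex-progress states are rejecting under a diamond but accepting under a box, and the changepoint-bounded operators handled by a constant-size product remembering the previous truth value of $p$ and whether a changepoint has occurred. All of these ingredients match the paper.

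There is, however, a genuine gap in your box case. \quot{Dualization (conjunctions in place of disjunctions)} is not enough, and taken literally it is wrong: if you keep folding in the \emph{positive} test states $q_{\psi'}$ and merely swap the connectives, the transition for $\bbox{r}\psi$ reads \quot{(obligation continues or completes) $\vee$ ($\psi'$ holds)}, which lets a run discharge the box obligation by \emph{verifying} a test. Semantically, a failing test removes a match, so the escape disjunct must assert that the test \emph{fails}, i.e., it must lead into an automaton for $\neg\psi'$. Concretely, for $\bbox{\psi'? \conc \phi}\psi$ and a word $w$ with $w \models \psi'$, $w_0 \models \phi$, but $w_1 w_2 \cdots \not\models \psi$, we have $w \not\models \bbox{\psi'? \conc \phi}\psi$, yet your dualized automaton accepts $w$ via the disjunct that checks $\psi'$. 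Moreover, you cannot obtain the needed automaton for $\neg\psi'$ by dualizing the shared state $q_{\psi'}$ in place (it is shared with other contexts, and dualizing a B\"uchi alternating automaton yields a co-B\"uchi one, leaving the class you are constructing in), nor by complementation without endangering the linear size bound. The paper's solution is the missing ingredient: use the \emph{syntactic} negation of Lemma~\ref{lemma_pldlnegation}, extended to $\ldlt$ by $\neg(\ddiamond{r}_{cp}\psi) = \bbox{r}_{cp}\neg\psi$ and $\neg(\bbox{r}_{cp}\psi) = \ddiamond{r}_{cp}\neg\psi$, which produces a same-size formula $\neg\psi'$, and build $\aut_{\neg\psi'}$ by the very same induction; the handoff itself stays positive (to $q_\psi$). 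With that repair your construction coincides with the paper's.
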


To prove the theorem, we inductively construct automata~$\aut_\psi$ for every subformula~$\psi \in \cl(\varphi)$ satisfying $L(\aut_\psi) = \set{w \in (\pow{P'})^\omega \mid w \models \psi}$. The automata for atomic formulas are straightforward and depicted in Figure~\ref{fig_atomicaut}(a) and (b). To improve readability, we allow propositional formulas over $P'$ as transition labels: the formula~$\phi$ stands for all sets $A \in \pow{P'}$ with $A \models \phi$. Furthermore, given automata~$\aut_{\psi_0}$ and $\aut_{\psi_1}$, using a standard construction, we can build the automaton~$\aut_{\psi_0 \vee \psi_1}$ by taking the disjoint union of the two automata, adding a new initial state~$q_0$ with $\delta(q_0, A) = \delta^0(q_0^0, A) \vee \delta^1(q_0^1, A)$. Here, $q_0^i$ is the initial state and $\delta^i$ is the transition function of $\aut_{\psi_i}$. The automaton~$\aut_{\psi_0 \wedge \psi_1}$ is defined similarly, the only difference being $\delta(q_0, A) = \delta^0(q_0^0, A) \wedge \delta^1(q_0^1, A)$.

\begin{figure}[h]
	\begin{center}
		\vspace{-.2cm}
\begin{tikzpicture}[
	every initial by arrow/.style={initial text=,-stealth, thick},
	every state/.style={thick}]

\node			at (-.8,1.5)	{(a)};
\node			at (3.5,1.5)	{(b)};
\node			at (7.5,1.5)	{(c)};

\node[state, initial] (q) at (0,0)	{};
\node[state, accepting]		(a) at (1.5,.75)	{};
\node[state]		(b)	at (1.5,-.75) {};

\path[thick,-stealth]
(q) edge[bend left] node[above]{$\phi$} (a)
(q) edge[bend right] node[below]{$\neg\phi$} (b)
(a) edge[loop right] node[above, yshift =.15cm]{$\, \, \ttrue$} ()
(b) edge[loop right] node[below, yshift =-.15cm]{$\, \, \ttrue$} ();

\node[state, initial] (qp) at (4,0)	{};
\node[state, accepting]		(ap) at (5.5,.75)	{};
\node[state]		(bp)	at (5.5,-.75) {};

\path[thick,-stealth]
(qp) edge[bend left] node[above]{$\neg\phi$} (ap)
(qp) edge[bend right] node[below]{$\phi$} (bp)
(ap) edge[loop right] node[above, yshift =.15cm]{$\, \, \ttrue$} ()
(bp) edge[loop right] node[below, yshift =-.15cm]{$\, \, \ttrue$} ();

\node[state, initial, accepting] (e) at (8, 0) {};
\node[state, accepting] (b) at (10, .75) {};
\node[state, accepting] (y) at (9.5, -.75) {};
\node[state, accepting] (yb) at (12, -.75) {};
\node[state, accepting] (by) at (12.5, .75) {};
\node[state] (s) at (14, 0) {};

\path[thick, -stealth]
(e) edge[bend left] node[above]{$p$} (b)
(e) edge[bend right] node[below]{$\neg p$} (y)
(b) edge[in=210,out=240,loop] node[left]{$\,\, p$} ()
(y) edge[in=30,out=60,loop] node[right,xshift=-.2cm]{$\,\, \neg p$} ()
(b) edge			node[above]{$\neg p$} (by)
(y) edge			node[below]{$p$} (yb)
(yb) edge[in=30,out=60,loop] node[right]{$\,\, p$} ()
(by) edge[in=210,out=240,loop] node[left]{$\,\, \neg p$} ()
(yb) edge[bend right] node[below]{$\neg p$} (s)
(by) edge[bend left] node[above]{$p$} (s)
(s) edge[loop above] node[above] {$\ttrue$} ()
;

\end{tikzpicture}
\vspace{-.3cm}
\caption{The automata~$\aut_{p}$ (a), $\aut_{\neg p}$ (b), and $\aut_{cp}$ (c), which tracks color changepoints.}
\label{fig_atomicaut}
	\end{center}
\end{figure}

It remains to consider temporal formulas, e.g., $\ddiamond{r}\psi$. First, we turn the regular expression~$r$ into an automaton~$\aut_r$. Recall that tests do not process input letters. Hence, we disregard the tests when defining the transition function, but we label states at which the test has to be executed by this test. We use the Thompson construction~\cite{Thompson68} to turn~$r$ into $\aut_r$, i.e., we obtain an $\epsilon$-NFA. Then, we show how to combine $\aut_r$ with the automaton~$\aut_\psi$ and the automata~$\aut_{\psi_1}, \ldots, \aut_{\psi_k}$, where $\psi_1?, \ldots, \psi_k?$ are the test occurring in $r$. The $\epsilon$-transitions introduced by the Thompson construction are then removed, since alternating automata do not allow them. During this process, we also ensure that the transition relation takes tests into account by introducing universal transitions that lead from a state marked with $\psi_j?$ into the corresponding automaton~$\aut_{\psi_j}$.

Formally, an $\epsilon$-NFA with markings $\aut = (Q, \Sigma, q_0, \delta, C, \marking)$ consists of a finite set~$Q$ of states, an alphabet~$\Sigma$, an initial state~$q_0 \in Q$, a transition function~$\delta \colon Q \times \Sigma\cup\set{\epsilon} \rightarrow \pow{Q}$, a set~$C$ of final states ($C$, since we use them to concatenate automata), and a partial marking function~$\marking$, which assigns to some states~$q \in Q$ an $\ldlt$ formula~$\marking(q)$.
We write $q \xrightarrow{a} q'$, if $q' \in \delta(q, a)$ for $a \in \Sigma \cup \set{\epsilon}$. An $\epsilon$-path~$\pi$ from $q$ to $q'$ in $\aut_r$ is a sequence~$\pi = q_1 \cdots q_k$ of $k \ge 1$ states with $q =q_1 \xrightarrow{\epsilon} \cdots \xrightarrow{\epsilon} q_k = q'$. The set of all $\epsilon$-paths from $q$ to $q'$ is denoted by $\Pi(q, q')$. Let $\marking(\pi) = \set{\marking(q_i) \mid 1 \le i \le k}$  be the set of markings visited by $\pi$.

A run of $\aut$ on $w_0 \cdots w_{n-1} \in \Sigma^*$ is a sequence~$q_0 q_1 \cdots q_n$ such that for every $i$ in the range~$0 \le i \le n-1$ there is a state~$q_{i}'$ reachable from $q_i$ via an $\epsilon$-path~$\pi_{i}$ and with $q_{i+1} \in \delta(q_{i}', w_{i})$. The run is accepting if there is a $q_{n}' \in C$ reachable via an $\epsilon$-path~$\pi_n$ from $q_n$.  This slightly unusual definition (but equivalent to the standard one) simplifies our reasoning below. Also, the definition is oblivious to the marking.

We begin by defining the automaton~$\aut_r$ by induction over the structure of $r$ as depicted in Figure~\ref{fig_autr}. Note that the automata we construct have no outgoing edges leaving the unique final state and that we mark some states with tests~$\psi_j?$ (denoted by labeling states with the test).

\begin{figure}[h!]
	\begin{center}
		\vspace{-.2cm}
\begin{tikzpicture}[
	every initial by arrow/.style={initial text=,-stealth, thick},
	every state/.style={thick}]

\node at (.6,.3) {$\aut_\phi$:};
\node at (.6, -1.2) {$\aut_{\psi?}$:};
\node at (6.4,-.5) {$\aut_{r_0 + r_1}$:};
\node at (-0.7,-2.5) {$\aut_{r_0\conc r_1}$:};
\node at (7, -3.8) {$\aut_{r_0^*}$:};

\node[state, initial] (q) at (2,0)	{};
\node[state, accepting]		(a) at (4,0)	{};

\path[thick,-stealth]
(q) edge node[above]{$\phi$} (a);


\node[state, initial] (qp) at (2,-1.5)	{$\psi?$};
\node[state, accepting]		(ap) at (4,-1.5)	{};

\path[thick,-stealth]
(qp) edge node[above]{$\epsilon$} (ap);


\draw[rounded corners, thick]  (9.3,.7) rectangle (12.7,-.7);
\draw[rounded corners, thick]  (9.3,-1.3) rectangle (12.7,-2.7);

\node		at (11,-.4) {$\aut_{r_0}$};
\node		at (11,-2.4) {$\aut_{r_1}$};

\node[state, initial] (u0) at (8,-1) {};
\node[state] (u1) at (10,0) {};
\node[state] (u2) at (10,-2) {};
\node[state] (u3) at (12,0) {};
\node[state] (u4) at (12,-2) {};
\node[state, accepting] (u5) at (14,-1) {};

\path[thick, -stealth]
(u0) edge[bend left] node[above] {$\epsilon$} (u1)
(u0) edge[bend right] node[below] {$\epsilon$} (u2)
(u3) edge[bend left] node[above] {$\epsilon$} (u5)
(u4) edge[bend right] node[below] {$\epsilon$} (u5);


\draw[rounded corners, thick]  (1.3,-2.5) rectangle (4.7,-3.9);
\draw[rounded corners, thick]  (1.3,-4.5) rectangle (4.7,-5.9);

\node		at (3,-3.6) {$\aut_{r_0}$};
\node		at (3,-5.6) {$\aut_{r_1}$};

\node[state, initial] (u0) at (0,-3.2) {};
\node[state] (u1) at (2,-3.2) {};
\node[state] (u2) at (2,-5.2) {};
\node[state] (u3) at (4,-3.2) {};
\node[state] (u4) at (4,-5.2) {};
\node[state, accepting] (u5) at (6,-5.2) {};

\path[thick, -stealth]
(u0) edge[] node[above] {$\epsilon$} (u1)
(u4) edge[] node[above] {$\epsilon$} (u5);

\draw[thick, rounded corners, -stealth]
(u3.east) -- (5.5,-3.2) -- node[right] {$\epsilon$}  (5.5, -4.2) -- (.5, -4.2) -- (.5, -5.2) -- (u2.west);


\draw[rounded corners, thick]  (9.3,-3.5) rectangle (12.7,-4.9);
\node		at (11,-4.6) {$\aut_{r_0}$};

\node[state, initial] (c0) at (8,-4.2) {};
\node[state] (c1) at (10,-4.2) {};
\node[state] (c2) at (12,-4.2) {};
\node[state] (c3) at (14,-4.2) {};

\path[thick, -stealth]
(c0) edge[bend right = 40] node[below] {$\epsilon$} (c3)
(c2.north) edge[bend right = 60] node[above] {$\epsilon$} (c1.north)
(c0) edge node[above] {$\epsilon$} (c1)
(c2) edge node[above] {$\epsilon$} (c3);

\end{tikzpicture}
\vspace{-.2cm}
\caption{The inductive definition of $\aut_r$ via the Thompson construction.}
\label{fig_autr}
	\end{center}
\end{figure}
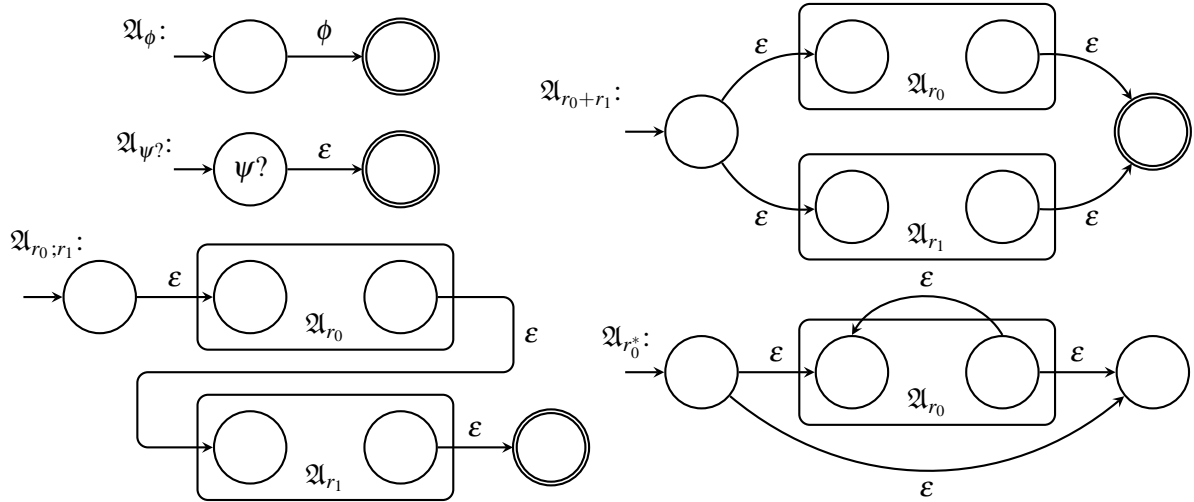

\begin{lemma}
\label{lemma_autrcorrectness}
Let $w = w_0 w_1 w_2 \cdots \in (\pow{P'})^\omega$ and let $w_0 \cdots w_{n-1}$ be a (possibly empty, if $n=0$) prefix of $w$. The following two statements are equivalent:
\begin{enumerate}
	\item $\aut_r$ has an accepting run $q_0 q_1 \cdots q_n$ on $w_0 \cdots w_{n-1}$ with $\epsilon$-paths $\pi_i$ for $i$ in the range~$0 \le i \le n$ such that $w_{i}w_{i+1}w_{i+2} \cdots \models \bigwedge \marking(\pi_i)$ for every $i$.

\item $(0,n) \in \Rexp(r,w)$.

\end{enumerate}
\end{lemma}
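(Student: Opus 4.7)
The natural approach is structural induction on the regular expression $r$, establishing both directions of the equivalence simultaneously in each case by inspecting the Thompson building blocks in Figure~\ref{fig_autr}. For the base cases this is essentially by inspection. When $r = \phi$, the automaton $\aut_\phi$ admits an accepting run only for $n = 1$ with $w_0 \models \phi$; the two $\epsilon$-paths are single-state, carry no markings, and match $\Rexp(\phi, w) = \{(m, m+1) \mid w_m \models \phi\}$ at $m = 0$. When $r = \psi?$, no input letter can be consumed, so $n = 0$; the only $\epsilon$-path $\pi_0$ from the initial state (marked with $\psi?$) to the final state collects the marking $\psi$, so the side condition $w_0 w_1 \cdots \models \bigwedge \marking(\pi_0)$ reduces to $(w, 0) \models \psi$, which is exactly $(0,0) \in \Rexp(\psi?, w)$.

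For the inductive step, I would trace the Thompson construction and factor accepting runs through the subautomata. For $r = r_0 + r_1$, an accepting run must commit to one branch via the initial $\epsilon$-transition and leave through the matching exit $\epsilon$-transition; this restricts to an accepting run of $\aut_{r_i}$ with essentially the same marked $\epsilon$-paths, so the induction hypothesis yields $(0,n) \in \Rexp(r_i, w) \subseteq \Rexp(r_0 + r_1, w)$, and conversely every accepting run of $\aut_{r_i}$ lifts to one of $\aut_{r_0+r_1}$. For $r = r_0 \conc r_1$, the unique $\epsilon$-edge from the final state of $\aut_{r_0}$ to the initial state of $\aut_{r_1}$ determines a splitting index $m$ such that the run restricts to accepting runs of $\aut_{r_0}$ on $w_0 \cdots w_{m-1}$ and of $\aut_{r_1}$ on $w_m \cdots w_{n-1}$; combining the two inductive conclusions yields $(0,n) \in \Rexp(r_0 \conc r_1, w)$. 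For $r = r_0^*$, the run either stays at the initial state (so $n = 0$, matching the trivial pair $(0,0) \in \Rexp(r_0^*, w)$) or iterates through $\aut_{r_0}$ finitely many times along the back $\epsilon$-edge; applying induction to each iteration produces splitting points $0 = n_0 < n_1 < \cdots < n_k = n$ with $(n_j, n_{j+1}) \in \Rexp(r_0, w)$ for all $j$.

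The main obstacle is managing the marking condition across the composed $\epsilon$-paths. When Thompson-composing subautomata, every $\epsilon$-path $\pi_i$ in the combined automaton at an input position $i$ decomposes uniquely into $\epsilon$-paths inside the subautomata, glued together by the freshly introduced (and unmarked) $\epsilon$-edges. Consequently $\marking(\pi_i)$ is the union of the markings of the component sub-paths, and $\bigwedge \marking(\pi_i)$ distributes over this union, so the side condition in the composed automaton is logically equivalent to the conjunction of the corresponding conditions in the subautomata. Verifying this factorisation carefully for each composition rule — and, for $r_0^*$, at every traversal of the back $\epsilon$-edge — is the bookkeeping-heavy part of the argument. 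Once it is in place, the induction hypothesis transports the correspondence between accepting runs and elements of $\Rexp(r,w)$ from the sub-expressions to $r$ itself, yielding both directions of the equivalence.
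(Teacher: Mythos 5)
The paper states Lemma~\ref{lemma_autrcorrectness} without proof, so there is no official argument to compare against; your structural induction over $r$ along the Thompson construction is exactly the invariant the construction is built to satisfy, and your outline is correct, including the key observation that the freshly introduced $\epsilon$-edges are unmarked, so $\marking(\pi_i)$ decomposes as a union over the component sub-paths and $\bigwedge\marking(\pi_i)$ splits into the corresponding sub-conditions. Two pieces of bookkeeping should be made explicit in a full write-up. First, the lemma as stated only speaks about matchings starting at position~$0$, so to apply the induction hypothesis to the second factor in the case $r = r_0 \conc r_1$ (and to each iteration in the case $r = r_0^*$) you need the shift property that $(m,n) \in \Rexp(r_1,w)$ iff $(0,n-m) \in \Rexp(r_1, w_m w_{m+1}\cdots)$; this is also what makes the marking condition compose correctly, since all markings collected on an $\epsilon$-path crossing from $\aut_{r_0}$ into $\aut_{r_1}$ at input position $m$ are evaluated on the same suffix $w_m w_{m+1}\cdots$. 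Second, in the star case the semantics of $\Rexp(r_0^*,w)$ permit zero-length iterations ($n_j = n_{j+1}$), and on the automaton side a single $\epsilon$-path may traverse the back edge several times; your strictly increasing splitting points $0 = n_0 < n_1 < \cdots < n_k = n$ are justified only after normalizing --- discard zero-length pairs when going from semantics to runs, and apply the $n=0$ instance of the induction hypothesis to each letter-free traversal of $\aut_{r_0}$ when going from runs to semantics. Neither point threatens the argument; they are exactly the ``bookkeeping-heavy'' steps you flagged.
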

 

Fix $\psi$ and $r$ (with tests~$\psi_1?, \ldots, \psi_k?$) and let
$\aut_r = (Q^r, \pow{P'}, q_0^r, \delta^r, C^r, \marking)$, $\aut_{\psi} = (Q', \pow{P'}, q_0', \delta', F')$, and $\aut_{\psi_j} = (Q^j, \pow{P'}, q_0^j, \delta^j, F^j)$ for $j = 1, \ldots, k$ be the corresponding automata, which we assume to have pairwise disjoint sets of states. Next, we show how to construct  $\aut_{\ddiamond{r}\psi}$, $ \aut_{\bbox{r}\psi}$, $ \aut_{\ddiamond{r}_{cp}\psi}$, and $\aut_{\bbox{r}_{cp}\psi}$.

We begin with $\ddiamond{r}\psi$: we define $\aut_{\ddiamond{r}\psi} = (Q^r \cup Q' \cup Q_1 \cup \cdots \cup Q_k, \pow{P'}, q_0^r, \delta, F_1 \cup \cdots \cup F_k)$ with
\[\delta(q, A) = \begin{cases}
	\delta'(q, A) 		&\text{if $q \in Q'$},\\
	\delta^j(q, A) 		&\text{if $q \in Q^j$},\\
	\bigvee_{q' \in Q^r \setminus C^r}\bigvee_{\pi \in \Pi(q,q')} \bigvee_{p \in \delta^r (q', A)} (p \wedge \bigwedge_{\psi_j \in \marking(\pi)} \delta^j(q_0^j, A))&\\
	 \hspace{4.cm}\vee &\text{if $q \in Q^r$}.\\
	\bigvee_{q' \in C^r}\bigvee_{\pi \in \Pi(q,q')} (\delta'(q_0', A) \wedge \bigwedge_{\psi_j \in \marking(\pi)} \delta^j(q_0^j, A))  &\\
	\end{cases}\]
So, $\aut_{\ddiamond{r}\psi}$ is the union of the automata for the regular expression, the tests, and for $\psi$ with a modified transition function. The transitions of the automata~$\aut_{\psi}$ and $\aut_{\psi_j}$ are left unchanged and the transition function for states in $Q^r$ is obtained by removing $\epsilon$-transitions. First consider the upper disjunct: it ranges disjunctively over all non-final states~$p$ that are reachable via an initial $\epsilon$-path and an $A$-transition in the end. To account for the tests visited during the $\epsilon$-path (but not the test at $p$), we add conjunctively transitions that lead into the corresponding automata. The lower disjunct is similar, but ranges over paths that end in a final state. Since we concatenate the automaton~$\aut_r$ with the automaton $\aut_\psi$, all edges leading into final states of $\aut_r$ are rerouted to the initial state of $\aut_\psi$. The tests along the $\epsilon$-path are accounted for as in the first case. Finally, note that $Q^r$ does not contain any (B\"uchi) accepting states, i.e., every accepting run on $w$ has to leave $Q^r$ after a finite number of transitions. Since this is only possible via transitions that would lead $\aut_r$ into a final state, this ensures the existence of a position $n$ such that $(0,n) \in \Rexp(r,w)$.

The definition of $\aut_{\bbox{r}\psi}$ is dual, i.e., we have to use automata~$\aut_{\neg \psi_j}  = (Q^j, \pow{P'}, q_0^j, \delta^j, F^j)$ for $j = 1, \ldots, k$ for the negated tests and $\epsilon$-transitions are removed in a universal manner. Formally, we define $\aut_{\bbox{r}\psi} = (Q^r \cup Q' \cup Q_1 \cup \cdots \cup Q_k, \pow{P'}, q_0^r, \delta, Q^r \cup F_1 \cup \cdots \cup F_k)$ where
\[\delta(q, A) = \begin{cases}
	\delta'(q, A) 		&\text{if $q \in Q'$},\\
	\delta^j(q, A) 		&\text{if $q \in Q^j$},\\
	\bigwedge_{q' \in Q^r \setminus C^r}\bigwedge_{\pi \in \Pi(q,q')} \bigwedge_{p \in \delta^r (q', A)} (p \vee \bigvee_{\psi_j \in \marking(\pi)} \delta^j(q_0^j, A)) &\\
	\hspace{4.2cm}\wedge &\text{if $q \in Q^r$}.\\
	\bigwedge_{q' \in C^r}\bigwedge_{\pi \in \Pi(q,q')} (\delta'(q_0', A) \vee \bigvee_{\psi_j \in \marking(\pi)} \delta^j(q_0^j, A)) &\\
	\end{cases}\]
Note that we add $Q^r$ to the (B\"uchi) accepting states, since a run on $w$ might stay in $Q^r$ forever, as it has to consider all positions $n$ with $(0,n) \in \Rexp(r,w)$.

For the changepoint-bounded operators, we have to modify $\aut_r$ to make it count color changes. Let $\aut_{cp} = (Q^{cp}, \pow{P'}, q_0^{cp}, \delta^{cp}, C^{cp})$ be the DFA depicted in Figure~\ref{fig_atomicaut}(c). We define the product of $\aut_r$ and $\aut_{cp}$ as $\hat{\aut}_r = (\hat{Q}^r, \pow{P'}, \hat{q}_0^r, \hat{\delta}^r, \hat{C}^r, \hat{\marking}) $ where $\hat{Q}^r = Q^r \times Q^{cp}$, $\hat{q}_0^r = (q_0^r, q_0^{cp})$,
\[\delta((q,q'), A) = \begin{cases}
	\set{(p,\delta^{cp}(q',A)) \mid p \in \delta^r(q, A)} &\text{if $A \not= \epsilon$},\\
	\set{(p,q') \mid p \in \delta^r(q, A)} &\text{if $A = \epsilon$},
	\end{cases}\]
$\hat{C}^r = C^r \times C^{cp}$, and $\hat{\marking}(q,q') = \marking(q)$. 
Using this, we define $ \aut_{\ddiamond{r}_{cp}\psi}$ as we defined $\aut_{\ddiamond{r}\psi}$, but using $\hat{\aut}_r$ instead of $\aut_r$. Similarly, $\aut_{\bbox{r}_{cp}\psi}$ is defined as $\aut_{\bbox{r}\psi}$, but using $\hat{\aut}_r$ instead of $\aut_r$.

\begin{proof}[Proof of Theorem~\ref{theorem_autconstruction}]
First, we consider the size of $\aut_\varphi$. Boolean operations add one state while a temporal operator with regular expression~$r$ adds a number of states that is linear in the size of $r$ (which is its length), even when we take the intersection with the automaton checking for color changes. Note that we do not need to complement the automata~$\aut_{\psi_j}$ to obtain $\aut_{\neg \psi_j}$, instead we rely on Lemma~\ref{lemma_pldlnegation}. Hence, the size of $\aut_\varphi$ is linear in the size of $\varphi$.
It remains to prove $L(\aut_\varphi) = \set{w \in (\pow{P'})^\omega \mid w \models \varphi }$ by induction over the structure of $\varphi$. The induction start for atomic formulas and the induction step for disjunction and conjunction are trivial, hence it remains to consider the temporal operators. 

Consider $\ddiamond{r}\psi$. If $w \models \ddiamond{r}\psi$, then there exists a position~$n$ such that $w_n w_{n+1} w_{n+2} \cdots \models \psi$ and $(0,n) \in \Rexp(r,w)$. Hence, there is a run of $\aut_r$ on $w_0 \cdots w_{n-1}$ such that the tests visited during the run are satisfied by the appropriate suffixes of $w$. Thus, applying the induction hypothesis yields accepting runs of the test automata on these suffixes. Furthermore, there is an accepting run of $\aut_\psi$ on $w_n w_{n+1} w_{n+2} \cdots$, again by induction hypothesis. These runs can be \quot{glued} together to build an accepting run of $\aut_{\ddiamond{r}\psi}$ on $w$. 

For the other direction, consider an accepting run~$\rho$ of $\aut_{\ddiamond{r}\psi}$ on $w$. Let $n \ge 0$ be the last level of $\rho$ that contains a state from $Q^r$. Such a level has to exist since states in $Q^r$ are not accepting and they have no incoming edges from states of the automata~$\aut_\psi$ and $\aut_{\psi_j}$, but the initial state of $\aut_{\ddiamond{r}\psi}$ is in $Q^r$. Furthermore, $\aut_{\ddiamond{r}\psi}$ is non-deterministic and complete when restricted to states in $Q^r \setminus C^r$. Hence, we can extract an accepting run of $\aut_r$ from $\rho$ on $w_0 \cdots w_{n-1}$ that satisfies additionally the requirements formulated in Statement~1 of Lemma~\ref{lemma_autrcorrectness}, due to the transitions into the test automata and an application of the induction hypothesis. Hence, we have $(0,n) \in \Rexp(r,w) $. Furthermore, from the remainder of $\rho$ (levels greater or equal to $n$) we can extract an accepting run of $\aut_\psi$ on $w_n w_{n+1} w_{n+2} \cdots$. Hence, $w_n w_{n+1} w_{n+2} \cdots \models \psi$ by induction hypothesis. Altogether, we conclude $w \models \ddiamond{r}\psi$.

The case for $\bbox{r}\psi$ is dual, while the cases for the changepoint-bounded operators~$\ddiamond{r}_{cp}\psi$ and $\bbox{r}_{cp}\psi$ are analogous, using the fact that $\aut_{cp}$ only accepts words which have at most one changepoint.     
\end{proof}

Note that the size of $\aut_\varphi$ is linear in $\card{\varphi}$, but it is not clear that it can be computed in polynomial time in $\card{\varphi}$, since the transition functions of subautomata of the form~$\aut_{\ddiamond{r}\psi}$ contain disjunctions that range over the set of $\epsilon$-paths. Here, it suffices to consider paths that do not contain a state twice, but even this restriction still allows for an exponential number of different paths. Fortunately, we do not need to compute $\aut_\varphi$ in polynomial time. It suffices to do it in polynomial space, which is sufficient for the applications in the next sections, which is clearly possible.

Furthermore, using standard constructions (e.g., \cite{MiyanoH84, Schewe09}), we can turn the alternating B\"uchi automaton~$\aut_\varphi$ into a non-deterministic B\"uchi automaton of exponential size and a deterministic parity automaton\footnote{The states of a parity automaton are colored by $\col\colon Q \rightarrow \nats$. It accepts a word~$w$, if it has a run~$q_0 q_1 q_2 \cdots$ on $w$ such that $\max\set{\col(q) \mid q_i = q \text{ for infinitely many i}} $ is even.} of doubly-exponential size with linearly many colors.

\section{Model Checking}
\label{sec_mc}
In this section, we consider the $\pldl$ model checking problem. A ($P$-labeled) transition system $\sys = (S, s_0, E, \ell)$ consists of a finite set~$S$ of states, an initial state~$s_0$, a (left-)total edge relation~$E \subseteq S \times S$, and a labeling~$\ell \colon S \rightarrow \pow{P}$. An initial path through $\sys$ is a sequence~$\pi = s_0 s_1 s_2\cdots$ of states satisfying $(s_n, s_{n+1}) \in E$ for every $n$. Its trace is defined as $\trace(\pi) = \ell(s_0) \ell(s_1) \ell(s_2) \cdots$. We say that $\sys$ satisfies a $\pldl$ formula~$\varphi$ with respect to a variable valuation~$\alpha$, if we have $(\trace(\pi), \alpha) \models \varphi$ for every initial path~$\pi$ of $\sys$.
The model checking problem asks, given a transition system~$\sys$ and a formula~$\varphi$, to determine whether $\sys$ satisfies $\varphi$ with respect to some variable valuation~$\alpha$.

\begin{theorem}
\label{thm_mc}
The $\pldl$ model checking problem is $\pspace$-complete.
\end{theorem}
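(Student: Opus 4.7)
The lower bound is inherited from $\ltl$ model checking: every $\ltl$ formula translates into $\pldl$ (Example~\ref{ex_pldl}), and $\ltl$ model checking is already $\pspace$-hard, so the $\pldl$ variant must be too. The rest of the argument focuses on the matching upper bound.

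The plan is to chain together the three tools developed earlier. First, I would apply Lemma~\ref{lemma_removeboxes} to replace the input formula~$\varphi$ with a $\pldldiamond$ formula~$\varphi'$ of the same size that is equivalent with respect to the existential-$\alpha$ question. Second, I would invoke the alternating color technique of Lemma~\ref{lemma_altcolor}, together with monotonicity (Lemma~\ref{lemma_monotonicity}), to equivalently reformulate \emph{does there exist~$\alpha$ such that every initial path of $\sys$ satisfies $(\trace(\pi),\alpha)\models\varphi'$?} as \emph{does there exist~$k$ such that every $k$-bounded coloring of every initial path of $\sys$ satisfies $c(\varphi')$?} --- exactly the bridge provided by the classical alternating-color argument of Kupferman, Piterman and Vardi. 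Crucially, $c(\varphi')$ is now a variable-free $\ldlt$ formula of linear size.

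Next, I would build the alternating B\"uchi automaton~$\aut_{\neg c(\varphi')}$ of linear size by combining the extension of Lemma~\ref{lemma_pldlnegation} to $\ldlt$ with Theorem~\ref{theorem_autconstruction}, and convert it on the fly into a non-deterministic B\"uchi automaton~$\aut$ of exponential size via the standard Miyano--Hayashi construction. The reformulated question then becomes: does there exist a coloring of an initial path of $\sys$ whose blocks grow unboundedly but that is accepted by $\aut$? If no such witness exists, a uniform bound~$k$ can be extracted and, via Lemma~\ref{lemma_altcolor}, turned into a satisfying valuation; if one does exist, no uniform~$k$ can work. Since \quot{blocks grow unboundedly} is itself a B\"uchi condition that can be layered on top of the product of $\sys$ with $\aut$ and a nondeterministic coloring oracle, this collapses to a single non-emptiness check on an exponential-size NBA, which is solvable in $\nlogspace$ in the automaton's size and hence in $\pspace$ overall. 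Closure of $\pspace$ under complementation finishes the argument.

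The main obstacle will be the pumping step that collapses the quantifier alternation $\exists k\,\forall\pi$ into a single non-emptiness question: I would need to verify that the KPV pumping, originally phrased for PROMPT-LTL, still goes through in the presence of regular expressions and tests, essentially ensuring that whenever for every finite bound $k$ some $k$-bounded coloring of a path of $\sys$ violates $c(\varphi')$, these witnesses can be combined into one with unbounded blocks. Once this is in place, the remaining steps are routine automata-theoretic bookkeeping, and the polynomial space bound arises from exploring the exponential product on the fly.
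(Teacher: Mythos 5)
Your overall skeleton is indeed the paper's: hardness inherited from $\ltl$, elimination of parameterized boxes via Lemma~\ref{lemma_removeboxes}, the alternating color technique, an automaton for the negated relativized formula, a product with $\sys$, and an on-the-fly non-emptiness check in polynomial space. However, the step you yourself flag as the main obstacle --- collapsing the quantification over valuations into a single automaton check --- is exactly where your argument breaks, and what is missing is not a verification detail but the paper's key notion of a \emph{pumpable} path.

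Concretely, two things go wrong. First, your intermediate reformulation is not an equivalence: Lemma~\ref{lemma_altcolor}.\ref{lemma_alternatingcolor_pldltoldl} speaks of $k$-\emph{spaced} colorings, not $k$-\emph{bounded} ones. Since a $1$-bounded coloring is $k$-bounded for every $k$, the condition ``every $k$-bounded coloring of every path satisfies $c(\varphi')$'' forces every eventuality to be fulfilled within roughly two steps, which is strictly stronger than the existence of a satisfying $\alpha$. Second, and fatally, the single-witness condition you propose --- a coloring of a path of $\sys$ with unboundedly growing blocks that is accepted by the automaton for $\neg c(\varphi')$ --- is unsound. Take $\varphi' = \bbox{\ttrue^*}\ddiamond{\ttrue^*}_{\le x}\, p$ and a one-path system whose trace has $p$ exactly at the multiples of $10$: it satisfies $\varphi'$ with $\alpha(x)=10$, yet the coloring with changepoints at positions $12$ and $13$ followed by blocks of ever-growing length violates $\rel{\varphi'}$ at position $11$ (no $p$ reachable with at most one changepoint), has unbounded blocks and infinitely many changepoints, and is therefore accepted --- so your check would wrongly report failure. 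The defect is that a violation caused by two \emph{close} changepoints somewhere does not produce violating paths for larger and larger $\alpha$; one needs violations that survive when \emph{all} blocks are stretched. Moreover, ``blocks grow unboundedly'' cannot be ``layered on top as a B\"uchi condition'': every non-empty $\omega$-regular language contains an ultimately periodic word, and ultimately periodic words have bounded blocks, so this property is not $\omega$-regular at all. The paper's solution (Lemma~\ref{lemma_pumppath}, following \cite{KupfermanPitermanVardi09}) uses a different witness: a fair path of the colored B\"uchi graph $\aut \times \sys$ in which every block contains a repeated \emph{product} vertex. This pumpability is a structural condition on the product path, not on the coloring of the trace alone; it guarantees that each block can be pumped independently, yielding for every $\alpha$ a sufficiently spaced violating coloring of some path of $\sys$, and its non-emptiness is decided by a dedicated \nlogspace{} algorithm rather than an ordinary B\"uchi non-emptiness check. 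Without this notion, your reduction does not go through.
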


To solve the $\pldl$ model checking problem, we first notice that we can restrict ourselves to $\pldldiamond$ formulas. Let $\varphi$ and $\varphi'$ be  due defined as in Lemma~\ref{lemma_removeboxes}. Then, $\sys$ satisfies $\varphi$ with respect to some $\alpha$ if and only if $\sys$ satisfies $\varphi'$ with respect to some $\alpha'$. 

Our algorithm is similar to the one presented for $\prompt$ in \cite{KupfermanPitermanVardi09} and uses the alternating color technique. Recall that $p \notin P$ is the fresh atomic proposition used to specify the coloring and induces the blocks, maximal infixes with its unique changepoint at the first position. Let $G = (V, E, v_0, \ell, F)$  denote a colored B\"uchi graph consisting of a finite directed graph $(V, E)$, an initial vertex~$v_0$, a labeling function~$\ell \colon V \rightarrow \pow{\set{p}}$ labeling vertices by $p$ or not, and a set $F \subseteq V$ of accepting states. A path $v_0 v_1 v_2 \cdots $ through $G$ is pumpable, if all its blocks have at least one state that appears twice in this block. Furthermore, the path is fair, if it visits $F$ infinitely often. The pumpable non-emptiness problem asks, given a colored B\"uchi graph~$G$, whether it has a pumpable fair path starting in the initial state.

\begin{theorem}[\cite{KupfermanPitermanVardi09}]
The pumpable non-emptiness problem for colored B\"uchi graphs is $\nlogspace$-complete and can be solved in linear time.
\end{theorem}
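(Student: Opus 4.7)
The plan is to prove the two assertions (complexity and running time) through largely separate arguments. For \nlogspace-hardness, I would reduce directed graph reachability to pumpable non-emptiness: given a directed graph with source~$s$ and target~$t$, construct a colored Büchi graph whose vertex set is essentially the original vertex set augmented with a small ``trap'' gadget attached to~$t$. The gadget is a two-vertex cycle whose vertices carry opposite $p$-labels and are both accepting; its blocks have length one, so pumpability inside the gadget requires a self-loop, which I add. The rest of the graph is colored monotonically so that no block closes before $t$ is reached. Then a pumpable fair path exists iff $t$ is reachable from~$s$.

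For the \nlogspace\ upper bound I would run the standard guess-and-check paradigm, carrying in logarithmic space (i) the current vertex, (ii) a flag recording whether $F$ has been visited since the last reset, and (iii) a per-block witness used to certify pumpability. Whenever a changepoint occurs, the algorithm nondeterministically guesses a vertex~$v$ promised to be revisited inside the new block; it stores~$v$ and a single bit indicating whether~$v$ has already been re-seen, and rejects if a further changepoint occurs while that bit is still off. Fairness is witnessed by the standard trick of periodically demanding a visit to~$F$ before the next reset. Since the witness per block is a single vertex and the fairness flag is a single bit, the algorithm runs in \nlogspace.

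For the linear-time algorithm I would precompute, in linear time, the strongly connected components of the monochromatic subgraphs $G_0$ and $G_1$ obtained by keeping only edges between equally colored vertices; call a vertex \emph{good} if it lies in a nontrivial SCC of its monochromatic subgraph (equivalently, if there is a monochromatic cycle through it). Pumpability of a block is equivalent to the block containing a good vertex, so a pumpable fair path exists iff there is a fair infinite path all of whose blocks contain a good vertex. I would then construct an auxiliary graph~$G'$ in linear size that encodes ``must see a good vertex before the next changepoint'' as a reachability constraint on augmented states $(v,b)$, where~$b$ records whether the current block has already visited a good vertex, and run Tarjan-style fair-cycle detection on~$G'$. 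The main obstacle, and the place where the construction must be designed carefully, is aligning the per-block pumpability obligation with the Büchi fairness condition without blowing up the state space beyond constant factors; this is what makes the alignment between the block boundaries (determined by $\ell$) and the SCC structure of~$G'$ delicate, but not more than a routine product construction once the characterization via good vertices is established.
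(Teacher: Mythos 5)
First, a point of comparison: the paper does not prove this statement at all --- it is imported verbatim from Kupferman, Piterman, and Vardi~\cite{KupfermanPitermanVardi09} and used as a black box in the model-checking section, so your proposal can only be judged on its own merits and against the cited reference. Its core is sound, and your linear-time argument is essentially the natural (and known) one: a block can be pumped exactly when it visits a \emph{good} vertex (one lying on a monochromatic cycle), goodness is computable in linear time from the SCCs of the two monochromatic subgraphs, and the per-block obligation is compiled into a product over states $(v,b)$ whose ordinary B\"uchi nonemptiness is then checked by SCC decomposition. Note that this characterization also gives the \nlogspace{} upper bound more cleanly than your on-the-fly guessing procedure: as written, that procedure traces an infinite path and it is never said when the machine halts and accepts. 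A nondeterministic logspace machine needs a finite certificate, so you must either guess a lasso (remembering one configuration, including the constant-size block-witness state, and closing the cycle through $F$), or simply invoke standard B\"uchi nonemptiness of the product graph $G'$, observing that the goodness queries are themselves \nlogspace{} queries and that \nlogspace{} is closed under such compositions by Immerman--Szelepcs\'enyi.

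The one genuine gap is in the hardness reduction. You ensure pumpability \emph{inside} the gadget via self-loops, but you never address pumpability of the block containing the stem from $s$ to $t$. If the input graph is acyclic (which you cannot exclude), any path from $s$ to $t$ is simple; if, in addition, the gadget vertex entered from $t$ carries the color opposite to the (monochromatic) main graph, then the stem block closes upon entering the gadget without a repeated vertex, so no pumpable fair path exists even though $t$ is reachable, and the reduction fails for that choice of coloring. The fix is easy but has to be stated: let the gadget vertex adjacent to $t$ carry the \emph{same} color as the main graph, so the stem block extends into the gadget and acquires its repetition from that vertex's self-loop before the first changepoint occurs. (Alternatively, under this paper's definition of block --- which permits an infinite final block, and such a block in a finite graph trivially contains a repetition --- one can color everything with one color, add a self-loop at $t$, and make $t$ the unique accepting vertex; then pumpable nonemptiness degenerates to plain reachability.)
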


The following lemma reduces the $\pldldiamond$ model checking problem to the pumpable non-emptiness problem for colored B\"uchi graphs of exponential size. Given a non-deterministic B\"uchi automaton~$\aut = (Q, \pow{P \cup \set{p}}, q_0, \Delta, F)$ recognizing the models of $\neg \rel{ \varphi} \wedge \theta_{\infty p} \wedge \theta_{\infty \neg p}$ (note that $\rel{\varphi}$ is negated) and a transition system~$\sys = (S, s_0, E, \ell)$, we define the product~$\aut \times \sys$ to be the colored B\"uchi graph 
\[ \aut \times \sys = (Q \times S \times \pow{\set{p}}, E', (q_0, s_0, \emptyset), \ell', F \times S \times \pow{\set{p}})\]
 where
$((q, s, C),(q', s', C')) \in E'$ if and only if $(s,s')\in E$ and $q' \in \delta(q, \ell(s) \cup C)$, and where $\ell'(q,s,C) = C$.

Each initial path $(q_0, s_0, C_0)(q_1, s_1, C_1)(q_2, s_2, C_2)\cdots$ through the product~$\aut \times \sys$ induces a coloring $(L(s_0) \cup C_0)(L(s_1) \cup C_1)(L(s_2) \cup C_2)\cdots$ of the trace of the path $s_0 s_1 s_2 \cdots$ through $\sys$. Furthermore, $q_0 q_1 q_2 \cdots$ is a run of $\aut$ on the coloring.

\begin{lemma}[cp.\ Lemma 4.2 of \cite{KupfermanPitermanVardi09}]
\label{lemma_pumppath}
$\sys$ does not satisfy $\varphi$ with respect to any $\alpha$ if and only if $\aut \times \sys$ has a pumpable fair path.
\end{lemma}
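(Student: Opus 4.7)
The plan is to reduce both implications to Lemma~\ref{lemma_altcolor}, using that a pumpable fair path in $\aut\times\sys$ can be ``stretched'' to witness arbitrarily spaced colorings, while conversely arbitrarily bounded coloring witnesses yield, by finiteness of the product, a pumpable path.

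For the backward direction, assume $\aut \times \sys$ has a pumpable fair path~$\rho = (q_0,s_0,C_0)(q_1,s_1,C_1)\cdots$. Projection yields an initial path~$\pi = s_0 s_1 s_2 \cdots$ of $\sys$, whose trace~$w$ has induced coloring~$w'$ with $(\ell(s_n)\cup C_n) \cap P = \ell(s_n) = w_n$, and $q_0 q_1 q_2 \cdots$ is an accepting run of $\aut$ on $w'$. Thus $w' \models \neg\rel{\varphi} \wedge \theta_{\infty p} \wedge \theta_{\infty \neg p}$, so in particular $w' \not\models c(\varphi)$. Given any $\alpha$, set $k = \max_{x\in\var(\varphi)}\alpha(x)$. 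I will use pumpability to construct from $\rho$ a fair path~$\tilde\rho$ in the product, projecting to an initial path~$\tilde\pi$ of $\sys$ with trace~$\tilde w$ and induced coloring~$\tilde w'$ that is $k$-spaced: within each block pick a state that occurs twice and insert the corresponding cycle enough times to force the block length to be at least $k$. Since the cycle lives entirely inside one block, it does not alter the projection to $\pow{\{p\}}$-labels between changepoints, so blocks stay blocks; and since $\rho$ visits $F\times S\times \pow{\{p\}}$ infinitely often and we only \emph{add} states inside blocks, $\tilde\rho$ remains fair. Thus $\tilde w'$ is a $k$-spaced coloring of a trace $\tilde w$ of $\sys$ with $\tilde w' \not\models c(\varphi)$, so by the contrapositive of Lemma~\ref{lemma_altcolor}(\ref{lemma_alternatingcolor_pldltoldl}) we get $(\tilde w, \alpha) \not\models \varphi$. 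As $\alpha$ was arbitrary, $\sys$ does not satisfy $\varphi$ with respect to any variable valuation.

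For the forward direction, assume $\sys$ does not satisfy $\varphi$ with respect to any $\alpha$. For each $k\in\nats$, let $\alpha_k(x) = 2k$ for all $x$; then there is an initial path~$\pi_k$ of $\sys$ with trace~$w^k$ such that $(w^k,\alpha_k) \not\models \varphi$. By the contrapositive of Lemma~\ref{lemma_altcolor}(\ref{lemma_alternatingcolor_ldldtopldl}), \emph{no} $k$-bounded coloring of $w^k$ models $c(\varphi)$. Pick any $k$-bounded coloring~$w'^k$ of $w^k$ (e.g.\ alternating blocks of length $k$); it has infinitely many changepoints, so $w'^k \models \theta_{\infty p} \wedge \theta_{\infty \neg p}$, and hence $w'^k \models \neg\rel{\varphi} \wedge \theta_{\infty p} \wedge \theta_{\infty \neg p}$. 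Thus $\aut$ accepts $w'^k$ and this accepting run, together with $\pi_k$, yields a fair path~$\rho_k$ in $\aut\times\sys$ whose coloring is $k$-bounded.

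Finally I extract a pumpable fair path by a pigeonhole argument on the finite product. Let $N = \card{Q\times S\times \pow{\{p\}}}$ and take $k > N$. In $\rho_k$, each block is a contiguous segment of length at most $k$ of the fair path, but from $\rho_k$ I need to build \emph{some} fair path all of whose blocks contain a repeated state. The argument proceeds by a standard unfolding/König-style limit: by taking the sequence $(\rho_k)_{k>N}$ and passing to a subsequence with a common initial vertex and common first-block behaviour, one selects a fair path in which infinitely many blocks have length exceeding $N$, hence contain a repeated state by pigeonhole; for blocks of length at most $N$ one uses the finite-branching compactness to replace them with variants (from some $\rho_{k'}$ with $k'$ larger) whose block also has a repetition, or observes that sufficiently long blocks always exist. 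I expect this combinatorial step to be the main obstacle, since it requires combining fairness preservation with the pumpability condition block by block; the cleanest path is to mimic Lemma~4.2 of~\cite{KupfermanPitermanVardi09}, which already establishes the corresponding statement for Büchi graphs and transfers here because only the product structure matters.
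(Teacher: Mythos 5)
Your backward direction (pumpable fair path implies no satisfying valuation) is correct and is essentially the paper's own argument: fix $\alpha$, pump the repetition inside each block $k$ times with $k = \max_{x}\alpha(x)$, observe the pumped path is still a fair path of the product so its coloring satisfies $\neg\rel{\varphi}\wedge\theta_{\infty p}\wedge\theta_{\infty\neg p}$ and hence falsifies $c(\varphi)$, and conclude via the contrapositive of Lemma~\ref{lemma_altcolor}.\ref{lemma_alternatingcolor_pldltoldl}.

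The forward direction, however, has a genuine gap, and it sits exactly where you flagged it. The culprit is the phrase \quot{pick any $k$-bounded coloring}: boundedness only caps block lengths, whereas pumpability needs blocks to be \emph{long} --- long enough that some product vertex repeats inside each block. Your attempted repair via a K\"onig-style limit over the family $(\rho_k)_{k > N}$ does not go through: fairness is a liveness condition and is not preserved under limits of paths, and splicing blocks taken from different $\rho_{k'}$ need not even yield a path of $\aut\times\sys$, since consecutive blocks must be joined by actual edges and the $\aut$-components must form a single run on the resulting coloring. The fix is the choice you mention parenthetically but never exploit: for a \emph{single} $k > \card{Q}\cdot\card{S}$, take the coloring of $\trace(\pi_k)$ all of whose blocks have length \emph{exactly} $k$, starting with $p$ false (so the product path begins in the initial vertex $(q_0,s_0,\emptyset)$). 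This coloring is in particular $k$-bounded, so your own argument already shows it falsifies $c(\varphi)$, hence satisfies $\neg\rel{\varphi}\wedge\theta_{\infty p}\wedge\theta_{\infty\neg p}$ and is accepted by $\aut$; and now every block of the induced fair path has length $k > \card{Q}\cdot\card{S}$ while there are only $\card{Q}\cdot\card{S}$ product vertices of each color, so the pigeonhole principle yields a repeated vertex in every block immediately. This is precisely what the paper does (with $\alpha^*(x) = 2\cdot\card{Q}\cdot\card{S}+1$ and the unique $\card{Q}\cdot\card{S}$-bounded and $\card{Q}\cdot\card{S}$-spaced coloring), and it makes the whole family $(\rho_k)_k$ and the compactness step unnecessary.
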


\begin{proof}
Let $\varphi$ not be satisfied by $\sys$ with respect to any $\alpha$, i.e., for every $\alpha$ there exists an initial path~$\pi$ through $\sys$ such that $(\trace(\pi), \alpha) \not\models \varphi$. Pick $\alpha^*$ such that $\alpha^*(x) = 2\cdot|Q| \cdot |S| +1$ and let $\pi^*$ be the corresponding path. Applying Lemma~\ref{lemma_altcolor}.\ref{lemma_alternatingcolor_ldldtopldl} yields $w \not \models c(\varphi)$ for every $|Q|\cdot|S|$-bounded coloring of $\trace(\pi^*)$. Now, consider the unique $|Q|\cdot|S|$-bounded and $|Q|\cdot|S|$-spaced coloring~$w$ of $\trace(\pi^*)$ that starts with $p$ not holding true in the first position. As argued above, $w \not \models c(\varphi)$, and we have $w \models \theta_{\infty p} \wedge \theta_{\infty \neg p}$, as $w$ is bounded. Hence, $w \models \neg \rel{ \varphi} \wedge \theta_{\infty p} \wedge \theta_{\infty \neg p}$, i.e., there is an accepting run~$q_0 q_1 q_2 \cdots$ of $\aut$ in $w$. This suffices to show that $(q_0, \pi_0, w_0\cap \set{p}) (q_1, \pi_1, w_1\cap \set{p}) (q_1, \pi_1, w_2\cap \set{p}) \cdots$ is a pumpable fair path through $\aut \times \sys$, since every block has length greater than $|Q|\cdot|S|$. This implies the existence of a repeated state in every block, since there are exactly $|Q|\cdot|S|$ vertices of each color.

Now, let $\aut \times \sys$ contain a pumpable fair path $(q_0, s_0, C_0)(q_1, s_1, C_1)(q_2, s_2, C_2)\cdots$, fix some arbitrary $\alpha$, and define $k = \max_{x \in \vardiamond{\varphi}}\alpha(x)$. There is a repetition of a vertex of $\aut \times \sys$ in every block, each of which can be pumped $k$ times. This path is still fair and induces a coloring~$w_k'$ of a trace~$w_k$ of an initial path of $\sys$. Since the run encoded in the first components is an accepting one on $w_k'$, we conclude that the coloring~$w_k'$ satisfies $\neg c(\varphi)$. Furthermore, $w_k'$ is $k$-spaced, since we pumped each repetition $k$ times.

Towards a contradiction assume we have $(w, \alpha) \models \varphi$. 
Applying Lemma~\ref{lemma_altcolor}.\ref{lemma_alternatingcolor_pldltoldl} yields $w' \models c(\varphi)$, which contradicts $\neg c(\varphi)$. Hence, for every $\alpha$ we have constructed a path of $\sys$ whose trace does not satisfy $\varphi$ with respect to $\alpha$, i.e., $\sys$ does not satisfy $\varphi$ with respect to any $\alpha$.
\end{proof}

We can deduce an upper bound on valuations that satisfy a formula in a given transition system.

\begin{corollary}
If there is a variable valuation such that $\sys$ satisfies $\varphi$, then there is also one that is bounded exponentially in $\size{\varphi}$ and linearly in the number of states of $\sys$.	
\end{corollary}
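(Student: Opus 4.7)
My plan is to observe that the bound~$\alpha^\star(x) = 2\card{Q}\card{S}+1$ that already appears inside the proof of Lemma~\ref{lemma_pumppath} is itself a witness for the corollary, and to establish this by re-running the forward half of that proof in contrapositive form.

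First I reduce to $\pldldiamond$ via Lemma~\ref{lemma_removeboxes}: if $\sys$ satisfies $\varphi$ with respect to some~$\alpha$, then the associated $\pldldiamond$ formula~$\varphi'$ (of the same size) is satisfied in $\sys$ by some $\alpha'$, and a bounded valuation for $\varphi'$ yields one for $\varphi$ (by setting all box variables to~$0$, as in the construction of $\alpha'_0$ in the proof of Lemma~\ref{lemma_removeboxes}). Hence I may assume $\varphi$ is a $\pldldiamond$ formula. Let $\aut$ be the non-deterministic B\"uchi automaton for $\neg\rel{\varphi}\wedge\theta_{\infty p}\wedge\theta_{\infty\neg p}$ obtained from Theorem~\ref{theorem_autconstruction} together with the standard alternating-to-non-deterministic conversion; its state space~$Q$ has cardinality $2^{O(\size{\varphi})}$, so $\alpha^\star(x)=2\card{Q}\card{S}+1$ is exponential in $\size{\varphi}$ and linear in $\card{S}$.

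It remains to show that $\alpha^\star$ is actually a satisfying valuation. I would argue by contradiction: assume some initial path~$\pi^\star$ of $\sys$ has trace~$w$ with $(w,\alpha^\star)\not\models\varphi$. By Lemma~\ref{lemma_monotonicity}, also $(w,\alpha_0)\not\models\varphi$ where $\alpha_0(x)=2\card{Q}\card{S}$; by the contrapositive of Lemma~\ref{lemma_altcolor}.\ref{lemma_alternatingcolor_ldldtopldl} with $k=\card{Q}\card{S}$, no $k$-bounded coloring of~$w$ satisfies $c(\varphi)$. Picking the unique $k$-bounded-and-$k$-spaced coloring~$u$ of~$w$ that starts outside~$p$, we obtain $u\models\neg\rel{\varphi}\wedge\theta_{\infty p}\wedge\theta_{\infty\neg p}$, hence an accepting run of~$\aut$ on~$u$. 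Pairing this run with~$\pi^\star$ and recording the colors gives a fair path through $\aut\times\sys$; since the blocks have length on the order of $\card{Q}\card{S}$ and there are only $\card{Q}\card{S}$ vertices of each color, the pigeonhole principle forces a repetition in every block, so the path is pumpable. By Lemma~\ref{lemma_pumppath}, this contradicts the hypothesis that $\sys$ satisfies $\varphi$ with respect to some valuation.

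I expect no serious obstacle: the argument is essentially the forward direction of Lemma~\ref{lemma_pumppath}, repackaged. The only delicate points are to apply monotonicity in the right direction so that the threshold drops from $2\card{Q}\card{S}+1$ to~$2k$ before invoking the alternating-color lemma, and to track the exponential cost of converting the alternating B\"uchi automaton from Theorem~\ref{theorem_autconstruction} into a non-deterministic one, in order to confirm the promised exponential-in-$\size{\varphi}$ and linear-in-$\card{S}$ bound on~$\alpha^\star$.
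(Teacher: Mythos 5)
Your proposal is correct and takes essentially the same approach as the paper: the paper's proof likewise fixes $\alpha^*(x) = 2\cdot\card{Q}\cdot\card{S}+1$, assumes $\sys$ satisfies $\varphi$ with respect to some $\alpha$ but not with respect to $\alpha^*$, re-runs the forward construction from the proof of Lemma~\ref{lemma_pumppath} to obtain a pumpable fair path in $\aut \times \sys$, and derives a contradiction from that lemma's equivalence. Your write-up merely makes explicit what the paper leaves implicit (the reduction to $\pldldiamond$ via Lemma~\ref{lemma_removeboxes}, the monotonicity step before invoking Lemma~\ref{lemma_altcolor}, and the exponential size of $Q$); even the slight pigeonhole imprecision on blocks of length exactly $\card{Q}\cdot\card{S}$ is shared with the paper's own argument.
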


\begin{proof}
Let $\sys$ satisfy $\varphi$ with respect to $\alpha$, but not with the valuation~$\alpha^*$ with $\alpha^*(x) = 2\cdot|Q| \cdot |S| +1$. In the preceding proof, we constructed a pumpable fair path in $\aut \times \sys$ starting from this assumption. This contradicts Lemma~\ref{lemma_pumppath}, since $\sys$ satisfying $\varphi$ with respect to $\alpha$ is equivalent to $\aut \times \sys$ not having a pumpable fair path. Since $2\cdot|Q| \cdot |S| +1$ is exponential in $\size{\varphi}$ and linear in $\card{S}$, the result follows.
\end{proof}

A matching lower bound of $2^n$ can be proven by implementing a binary counter with $n$ bits using a formula of polynomial size in $n$. This holds already true for $\prompt$, as noted in~\cite{KupfermanPitermanVardi09}.

It remains to prove the main result of this section: $\pldl$ model checking is $\pspace$-complete.

\begin{proof}[Proof of Theorem~\ref{thm_mc}]
$\pspace$-hardness follows directly from the $\pspace$-hardness of the $\ltl$ model checking problem~\cite{SistlaClarke85}, as $\ltl$ is a fragment of $\pldl$.

The following is a $\pspace$ algorithm: construct $\aut \times \sys$ and check whether it contains a pumpable fair path, which is correct due to Lemma~\ref{lemma_pumppath}. Since the search for such a path can be implemented on-the-fly without having to construct the full product~\cite{KupfermanPitermanVardi09}, it can be implemented using polynomial space.
\end{proof}

\section{Realizability}
\label{sec_real}
In this section, we consider the realizability problem for $\pldl$. Throughout the section, we fix a partition~$(I, O)$ of the set of atomic propositions~$P$. An instance of the $\pldl$ realizability problem is given by a $\pldl$ formula~$\varphi$ (over $P$) and the problem is to decide whether Player~$O$ has a winning strategy in the following game, played in rounds~$n \in \nats$: in each round~$n$, Player~$I$ picks a subset $i_n \subseteq I$ and then Player~$O$ picks a subset~$o_n \subseteq O$. Player~$O$ wins the play with respect to a variable valuation~$\alpha$, if 
$((i_0 \cup o_0)(i_1 \cup o_1)(i_2 \cup o_2) \cdots, \alpha) \models \varphi$.

Formally, a strategy for Player~$O$ is a mapping~$\sigma\colon (\pow{I})^* \rightarrow \pow{O}$ and a play~$\rho = i_0 o_0 i_1 o_1 i_2 o_2 \cdots $ is consistent with $\sigma$, if we have $o_n = \sigma(i_0 \cdots i_n)$ for every $n$. We call $(i_0 \cup o_0)(i_1 \cup o_1)(i_2 \cup o_2) \cdots$ the outcome of $\rho$, denoted by $\outcome(\rho)$. We say that a strategy~$\sigma$ for Player~$I$ is winning with respect to a variable valuation~$\alpha$, if we have $(\outcome(\rho), \alpha) \models \varphi$ for every play~$\rho$ that is consistent with $\sigma$. The $\pldl$ realizability problem asks for a given $\pldl$ formula~$\varphi$, whether Player~$O$ has a winning strategy with respect to some variable valuation, i.e., there is a single $\alpha$ such that every outcome satisfies $\varphi$ with respect to $\alpha$. If this is the case, then we say that $\sigma$ realizes $\varphi$ and thus that $\varphi$ is realizable.

We show the $\pldl$ realizability problem to be $\twoexp$-complete: hardness follows easily from the $\twoexp$-completeness of the $\ltl$ realizability problem, which is a special case of the $\pldl$ realizability problem. Membership in $\twoexp$ on the other hand is shown by a reduction to the realizability problem for $\omega$-regular specifications. 

It is well-known that $\omega$-regular specifications are realizable by finite-state transducers (if they are realizable at all)~\cite{BuechiLandweber69}. A transducer~$\trans = (Q, \Sigma, \Gamma, q_0, \delta, \tau )$ consists of a finite set~$Q$ of states, an input alphabet~$\Sigma$, an output alphabet~$\Gamma$, an initial state~$q_0$, a transition function~$\delta \colon Q \times \Sigma \rightarrow Q$, and a output function~$\tau \colon Q \rightarrow \Gamma$. The function~$f_{\trans} \colon \Sigma^* \rightarrow \Gamma$ implemented by $\trans$ is defined as $f_{\trans}(w) = \tau(\delta^*(w))$, where $\delta^*$ is defined as usual: $\delta^*(\epsilon) = q_0$ and $\delta^*(wv) = \delta(\delta^*(w),v)$. To implement a strategy by a transducer, we use $\Sigma = \pow{I}$ and $\Gamma = \pow{O}$. Then, we say that the strategy $\sigma = f_\trans$ is finite-state. The size of $\sigma$ is the number of states of $\trans$. The following proof is analogous to the one for $\prompt$~\cite{KupfermanPitermanVardi09}.

\begin{theorem}
\label{thm_real}
The $\pldl$ realizability problem is $\twoexp$-complete.
\end{theorem}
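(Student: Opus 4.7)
The plan is in two steps. First, the lower bound is immediate: the $\ltl$ realizability problem is $\twoexp$-hard, and $\ltl$ is a syntactic fragment of $\pldl$, so the hardness transfers verbatim.

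For the upper bound, I would first reduce to the $\pldldiamond$ case using Lemma~\ref{lemma_removeboxes}: the formula~$\varphi'$ constructed there has the same size as $\varphi$, and the two implications in that lemma let any realizing strategy~$\sigma$ for~$\varphi$ be reused unchanged to realize~$\varphi'$ (mapping $\alpha$ to the corresponding $\alpha'$), and conversely. So from now on I may assume that $\varphi$ is a $\pldldiamond$ formula.

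The core step is an alternating-color reduction analogous to Lemma~\ref{lemma_pumppath}, but with the crucial twist that the fresh coloring proposition~$p$ is now an \emph{output} controlled by Player~$O$. Consider the extended realizability game with partition~$(I, O \cup \set{p})$ and $\omega$-regular winning condition given by the $\ldlt$ formula~$c(\varphi) = \rel{\varphi} \wedge \theta_{\infty p} \wedge \theta_{\infty \neg p}$. The central claim is: $\varphi$ is realizable if and only if Player~$O$ wins this extended game. For the forward direction, if $\sigma$ realizes~$\varphi$ with valuation~$\alpha$ and $k = \max_{x \in \var(\varphi)} \alpha(x)$, then Player~$O$ plays $\sigma$ on the original outputs and additionally flips the value of~$p$ every $k$ rounds; the resulting coloring is $k$-spaced, so Lemma~\ref{lemma_altcolor}.\ref{lemma_alternatingcolor_pldltoldl} yields $c(\varphi)$ on every outcome. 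For the backward direction, $\omega$-regular determinacy supplies a finite-state winning strategy~$\tau$ with, say, $N$ states. Any outcome consistent with~$\tau$ whose coloring contained a block of length exceeding~$N$ would witness a repetition of a strategy state inside the block, and Player~$I$ could pump the intervening input segment indefinitely to force the outcome to have $p$ eventually constant, contradicting $\theta_{\infty p} \wedge \theta_{\infty \neg p}$. Hence every outcome coloring is $N$-bounded, and Lemma~\ref{lemma_altcolor}.\ref{lemma_alternatingcolor_ldldtopldl} shows that $\tau$ (projected to~$O$) realizes~$\varphi$ with $\alpha(x) = 2N$.

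For complexity, Theorem~\ref{theorem_autconstruction} combined with the determinization remarked upon after it produces a deterministic parity automaton of doubly-exponential size with linearly many priorities that recognizes the models of~$c(\varphi)$. The extended game is then a parity game of doubly-exponential size with linearly many priorities, which standard algorithms solve in doubly-exponential time overall. I expect the main obstacle to be the backward direction of the central claim: the pumping argument that turns a finite-state winning strategy in the extended game into a uniform block-length bound across all outcomes. This is the realizability counterpart of the path-pumping argument underlying Lemma~\ref{lemma_pumppath}, but carried out at the level of strategies rather than paths, which is why Player~$O$ (and not Player~$I$) must control~$p$: only then does she have the power to realize a $k$-spaced coloring witnessing $c(\varphi)$.
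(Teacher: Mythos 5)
Your proposal is correct and follows essentially the same route as the paper: the lower bound from $\ltl$, the reduction to $\pldldiamond$ via Lemma~\ref{lemma_removeboxes}, the alternating-color equivalence with $p$ handed to Player~$O$ as an output (this is precisely the paper's Lemma~\ref{lemma_realred}, including your finite-state-strategy pumping argument for the backward direction), and finally solving a doubly-exponential parity game built from the determinized automaton obtained via Theorem~\ref{theorem_autconstruction}. The only deviations are inessential constants (your $N$-bounded blocks and valuation $2N$ versus the paper's $(n+1)$-bounded blocks and valuation $2n+2$).
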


When proving membership in $\twoexp$, we restrict ourselves without loss of generality to $\pldldiamond$ formulas, as this special case is sufficient as shown in Lemma~\ref{lemma_removeboxes}. First, we use the alternating color technique to show that the $\pldldiamond$ realizability problem is reducible to the realizability problem for specifications in $\ldlt$. When considering the $\ldlt$ realizability problem, we add the fresh proposition~$p$ used to specify the coloring to $O$, i.e., Player~$O$ is in charge of determining the color of each position.

\begin{lemma}[cp.\ Lemma 3.1 of \cite{KupfermanPitermanVardi09}]
	\label{lemma_realred}
A $\pldldiamond$ formula~$\varphi$ over $I$ and $O$ is realizable if and only if the $\ldlt$ formula~$c(\varphi)$ over $I$ and $O \cup \set{p}$ is realizable.
\end{lemma}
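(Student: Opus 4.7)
The plan is to prove the two directions separately, using Lemma~\ref{lemma_altcolor} as the bridge between $\pldldiamond$ semantics and $\ldlt$ semantics.

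For the forward direction, assume $\sigma$ realizes $\varphi$ with respect to some $\alpha$ and set $k = \max_{x \in \vardiamond(\varphi)} \alpha(x)$. I would construct a strategy $\sigma'$ in the augmented game (with $O' = O \cup \set{p}$) by letting $\sigma'(i_0\cdots i_n) = \sigma(i_0\cdots i_n) \cup X_n$, where $X_n \subseteq \set{p}$ is chosen deterministically (independent of the inputs) so that the induced $p$-coloring flips exactly every $k$ positions. For any play $\rho$ consistent with $\sigma'$, its outcome $w'$ projects onto the outcome $w$ of a $\sigma$-consistent play, so $(w,\alpha) \models \varphi$, and by construction $w'$ is a $k$-spaced coloring of $w$. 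Lemma~\ref{lemma_altcolor}.\ref{lemma_alternatingcolor_pldltoldl} then yields $w' \models c(\varphi)$, so $\sigma'$ realizes $c(\varphi)$.

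For the backward direction, assume $c(\varphi)$ is realizable. Since $c(\varphi)$ is an $\ldlt$ formula, Theorem~\ref{theorem_autconstruction} together with standard alternating-to-nondeterministic constructions and the Büchi--Landweber theorem guarantees that $c(\varphi)$ is realized by a finite-state strategy $\sigma' = f_\trans$, where $\trans$ has some $m$ states. Let $\sigma$ be the strategy in the original game obtained by projecting $\sigma'$ to $\pow{O}$. Given any input sequence, the outcome $w$ of $\sigma$ is the projection of the outcome $w'$ of $\sigma'$. I would then prove the key claim that every outcome $w'$ of $\sigma'$ is $m$-bounded, apply Lemma~\ref{lemma_altcolor}.\ref{lemma_alternatingcolor_ldldtopldl} with the fixed valuation $\alpha(x) = 2m$ for all $x$, and conclude $(w,\alpha) \models \varphi$, so $\sigma$ realizes $\varphi$.

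The main obstacle is the $m$-bounded claim; I would argue it by pumping. Suppose some outcome $w'$ of $\sigma'$ against some input sequence $i_0 i_1 i_2 \cdots$ contains a block of length exceeding $m$. Then by pigeonhole $\trans$ revisits some state $q$ at two positions $n_1 < n_2$ lying inside that block, and along the state cycle between $n_1$ and $n_2$ the output component $p$ is constant. Feeding $\trans$ the modified input $i_0 \cdots i_{n_1-1} (i_{n_1} \cdots i_{n_2-1})^\omega$ produces a $\sigma'$-consistent outcome in which $p$ stabilizes after position $n_1$, violating $\theta_{\infty p} \wedge \theta_{\infty \neg p}$ and hence $c(\varphi)$, contradicting the fact that $\sigma'$ realizes $c(\varphi)$. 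This establishes the boundedness bound, completing the reduction.
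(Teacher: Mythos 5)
Your proposal is correct and follows essentially the same route as the paper's proof: the forward direction uses the identical input-independent $p$-flipping construction (the paper flips based on $n \bmod 2k < k$) together with Lemma~\ref{lemma_altcolor}.\ref{lemma_alternatingcolor_pldltoldl}, and the backward direction uses the same finite-state assumption, projection to $\pow{O}$, pumping/pigeonhole argument showing outcomes are boundedly colored (the paper gets $(n+1)$-bounded and valuation $2n+2$; your $m$-bounded and $2m$ is an inessential difference), and Lemma~\ref{lemma_altcolor}.\ref{lemma_alternatingcolor_ldldtopldl}. The only cosmetic difference is that you explicitly justify the finite-state assumption via Theorem~\ref{theorem_autconstruction} and B\"uchi--Landweber, which the paper asserts with a citation.
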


\begin{proof}
Let $\varphi$ be realizable, i.e., there is a winning strategy~$\sigma\colon (\pow{I})^+ \rightarrow \pow{O} $ for Player~$O$ with respect to some~$\alpha$. Now, consider the strategy~$\sigma'\colon (\pow{I})^+ \rightarrow \pow{O \cup \set{p}}$ defined by
\[\sigma'(i_0 \cdots i_{n-1}) = 
\begin{cases}
\sigma(i_0 \cdots i_{n-1})					&\text{if $n \bmod 2k < k$,}\\
\sigma(i_0 \cdots i_{n-1}) \cup \set{p}		&\text{otherwise,}
\end{cases}\]
where $k = \max_{x \in \vardiamond(\varphi)} \alpha(x)$. We show that $\sigma'$ realizes $c(\varphi)$. To this end, let $\rho' = i_0 o_0 i_1 o_1 i_2 o_2 \cdots$ be a play that is consistent with $\sigma'$. Then, 
$ \rho = i_0 (o_0 \setminus \set{p}) i_1 (o_1 \setminus \set{p}) i_2 (o_2 \setminus \set{p}) \cdots$ is by construction consistent with $\sigma$, i.e., $(\outcome(\rho), \alpha)\models \varphi$. As $\rho'$ is a $k$-spaced $p$-coloring of $\rho$, we deduce $\rho' \models c(\varphi)$ by applying Lemma~\ref{lemma_altcolor}.\ref{lemma_alternatingcolor_pldltoldl}. Hence, $\sigma'$ realizes $c(\varphi)$.

Now, assume $c(\varphi)$ is realized by $\sigma'\colon (\pow{I})^+ \rightarrow \pow{O \cup \set{p}}$, which we can assume to be finite-state, say it is implemented by $\trans$ with $n$ states. We first show that every outcome that is consistent with $\sigma'$ is $n+1$-bounded. Such an outcome satisfies $c(\varphi)$ and has therefore infinitely many changepoints. Now, assume it has a block of length strictly greater than $n+1$, say between changepoints at positions~$i$ and $j$. Let $q_0 q_1 q_2 \cdots$ be the states reached during the run of $\trans$ on the projection of $\rho$ to $\pow{I}$. Then, there are two positions~$i'$ and $j'$ satisfying $i \le i' < j' < j$ in the block such that $q_{i'} = q_{j'}$. Hence, $q_0 \cdots q_{i'-1}(q_{i'} \cdots q_{j'-1})^\omega$ is also a run of $\trans$. However, the output generated by this run has only finitely many changepoints, since the output at the states~$q_{i'}, \ldots, q_{j'-1}$ coincides when restricted to $\set{p}$. This contradicts the fact that $\trans$ implements a winning strategy, which implies in particular that every output has infinitely many changepoints, as required by the conjunct~$\theta_{\infty p} \wedge \theta_{\infty \neg p}$ of $c(\varphi)$. Hence, $\rho$ is $(n+1)$-bounded.

Now, consider the strategy~$\sigma \colon (\pow{I})^+ \rightarrow \pow{O}$ defined by $\sigma(i_0 \cdots i_{n-1}) = \sigma'(i_0 \cdots i_{n-1}) \cap O$.
By definition, for every play~$\rho$ consistent with $\sigma$, there is a $(n+1)$-bounded $p$-coloring of $\rho$ that is consistent with $\sigma'$. Hence, applying Lemma~\ref{lemma_altcolor}.\ref{lemma_alternatingcolor_ldldtopldl} yields $(\rho, \beta) \models \rho$, where $\beta(x) = 2n+2$. Hence, $\sigma$ realizes $\varphi$ with respect to $\beta$. Note that $\sigma$ is also finite-state and of the same size as $\sigma'$.
\end{proof}


\begin{proof}[Proof of Theorem~\ref{thm_real}]
As already mentioned above, $\twoexp$-hardness of the $\ldl$ realizability problem follows immediately from the $\twoexp$-hardness of the $\ltl$ realizability problem~\cite{PnueliRosner89a}, as $\ltl$ is a fragment of $\pldl$.

Now, consider membership and recall that we have argued that it is sufficient to consider $\pldldiamond$. Thus, let $\varphi$ be a $\pldldiamond$ formula. By Lemma~\ref{lemma_realred} we know that it is sufficient to consider the realizability of $c(\varphi)$. Let $\aut = (Q, \pow{I \cup O \cup \set{p}}, q_0, \delta, \col)$ be a deterministic parity automaton recognizing the models of $c(\varphi)$. We turn $\aut$ into a parity game~$\game$ such that Player~$1$ wins $\game$ from some dedicated initial vertex if and only if $c(\varphi)$ is realizable.
To this end, we define the arena~$(V, V_0, V_1, E)$ with $V = Q \cup (Q \times \pow{I})$, $V_0 = Q$, $V_1 = Q \times \pow{I}$, and
	$E = \set{(q, (q,i)) \mid i \subseteq I} \cup \set{(q, i), \delta(q, i \cup o) \mid o \subseteq O \cup \set{p}}$,
i.e., Player~$0$ picks a subset $i \subseteq I$ and Player~$O$ picks a subset~$o \subseteq O$, which in turn triggers the (deterministic) update of the state stored in the vertices. Finally, we define the coloring~$\col_\arena$ of the arena via $\col_\arena(q) = \col_\arena(q,i) =\col(q)$. 

It is straightforward to show that Player~$O$ has a winning strategy from $q_0$ in the parity game~$(\arena, \col_\arena)$ if and only if $c(\varphi)$ (and thus $\varphi$) is realizable. Furthermore, if Player~$1$ has a winning strategy, then $\arena$ can be turned into a transducer implementing a strategy that realizes $c(\varphi)$ using $V$ as set of states. Note that $\card{V}$ is doubly-exponential in $\card{\varphi}$, if we assume that $I$ and $O$ are restricted to propositions appearing in $\varphi$. As the parity game is of doubly-exponential size and has linearly many colors, we can solve it in doubly-exponential time in the size of $\varphi$. This concludes the proof.
\end{proof}

Also, we obtain a doubly-exponential upper bound on a variable valuation that allows to realize a given formula. A matching lower bound already holds for $\pltl$~\cite{Zimmermann13}.

\begin{corollary}
If a $\pldldiamond$ formula~$\varphi$ is realizable with respect to some $\alpha$, then it is realizable with respect to some $\alpha$ that is bounded doubly-exponentially in $\card{\varphi}$. 
\end{corollary}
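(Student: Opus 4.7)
The plan is to read off the bound from the chain of reductions already used in the proof of Theorem~\ref{thm_real}. First I would invoke Lemma~\ref{lemma_realred} to translate realizability of the $\pldldiamond$ formula~$\varphi$ into realizability of the $\ldlt$ formula~$c(\varphi)$. By Theorem~\ref{theorem_autconstruction} together with the standard nondeterminization and determinization constructions noted at the end of Section~\ref{sec_automata}, $c(\varphi)$ is recognized by a deterministic parity automaton of doubly-exponential size in $\card{\varphi}$ with linearly many colors. Consequently, the parity game~$\game$ built from this automaton in the proof of Theorem~\ref{thm_real} has a vertex set of size~$n$ that is doubly-exponential in $\card{\varphi}$.

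Next, since $\varphi$ is realizable, so is $c(\varphi)$, and therefore Player~$O$ wins $\game$ from the designated initial vertex. Appealing to positional determinacy of parity games, her winning strategy can be implemented by a transducer~$\trans$ whose states form a subset of the vertices of $\game$. Hence $c(\varphi)$ is realizable by a finite-state strategy of size at most~$n$, which is doubly-exponential in $\card{\varphi}$.

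Finally, I would invoke the second direction of the proof of Lemma~\ref{lemma_realred}: given a transducer with $n$ states that realizes $c(\varphi)$, projecting its output to $\pow{O}$ yields a strategy realizing $\varphi$ with respect to the valuation~$\alpha$ defined by $\alpha(x) = 2n+2$ for every $x \in \vardiamond(\varphi)$. Since $n$ is doubly-exponential in $\card{\varphi}$, so is $2n+2$, which gives the desired bound. The argument is essentially bookkeeping: the only point that requires care is matching the bound $2n+2$ furnished by Lemma~\ref{lemma_realred} with the doubly-exponential state count coming from the parity game, which is immediate once the two bounds are composed.
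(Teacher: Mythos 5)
Your proposal is correct and follows essentially the same route as the paper's own proof: both directions of Lemma~\ref{lemma_realred} combined with the doubly-exponential transducer bound extracted from the parity-game construction in the proof of Theorem~\ref{thm_real}, yielding the valuation~$\alpha(x) = 2n+2$. You merely unfold the details the paper cites implicitly (determinization of $\aut_{c(\varphi)}$ and positional determinacy of parity games), which is fine but not a different argument.
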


\begin{proof}
If $\varphi$ is realizable, then so is $c(\varphi)$. Using the construction proving the right-to-left implication of Lemma~\ref{lemma_realred}, we obtain that $\varphi$ is realizable with respect to some $\alpha$ that is bounded by $2n+2$, where $n$ is the size of a transducer implementing the strategy that realizes $c(\varphi)$. We have seen in the proof of Theorem~\ref{thm_real} that the size of such a transducer is at most doubly-exponential in $\card{c(\varphi)}$, which is only linearly larger than $\card{\varphi}$. The result follows. 
\end{proof}

\section{Conclusion}
\label{sec_conc}
We introduced Parametric Linear Dynamic Logic, which extends Linear Dynamic Logic by temporal operators equipped with parameters that bound their scope, similarly to Parametric Linear Temporal Logic, which extends Linear Temporal Logic by parameterized temporal operators. Here, the model checking problem asks for a valuation of the parameters such that the formula is satisfied with respect to this valuation on every path of the transition system. Realizability  is defined in the same spirit. 

We showed $\pldl$ model checking to be complete for $\pspace$ and the realizability problem to be complete for $\twoexp$, just as for $\ltl$. Thus, in a sense, $\pldl$ is not harder than $\ltl$. Finally, we were able to give tight exponential respectively doubly-exponential bounds on the optimal valuations for model checking and realizability.

We did not consider the assume-guarantee model checking problem here, but the algorithm solving the problem for $\prompt$ presented in~\cite{KupfermanPitermanVardi09} should be adaptable to $\pldl$ as well. Another open problem concerns the computation of optimal valuations for $\pldldiamond$ and $\pldlbox$ formulas. By exhaustive search within the bounds mentioned above, one can determine the optima. We expect this to be possible in polynomial space for model checking and in triply exponential space for realizability, which is similar to the situation for $\pltl$~\cite{AlurEtessamiLaTorrePeled01, Zimmermann13}. Note that it is an open question whether optimal valuations for $\pltl$ realizability can be determined in doubly-exponential time.

\bibliographystyle{eptcs}
\bibliography{biblio}


\end{document}